\documentclass[english, a4paper]{lipics-v2021} 
\hideLIPIcs
\nolinenumbers
\usepackage[utf8]{inputenc}

\usepackage{microtype,hyperref,graphicx}
   \hypersetup{%
      breaklinks,%
      colorlinks=true,%
      urlcolor=[rgb]{0.25,0.0,0.0},%
      linkcolor=[rgb]{0.5,0.0,0.0},%
      citecolor=[rgb]{0,0.2,0.445},%
      filecolor=[rgb]{0,0,0.4},
      anchorcolor=[rgb]={0.0,0.1,0.2}%
   }
   
\title{Computing the Girth of a Segment Intersection Graph}

\author{Timothy M. Chan}{Siebel School of Computing and Data Science, University of Illinois Urbana-Champaign, USA}{tmc@illinois.edu}{https://orcid.org/0000-0002-8093-0675}{Work supported by NSF Grant CCF-2224271.} %
\author{Yuancheng Yu}{Siebel School of Computing and Data Science, University of Illinois Urbana-Champaign, USA}{yyu51@illinois.edu}{https://orcid.org/0009-0003-5570-1963}{}

\titlerunning{}
\authorrunning{T.\,M. Chan and Y. Yu}

\Copyright{Timothy M. Chan and Yuancheng Yu}

\ccsdesc[100]{Theory of computation~Computational geometry}

\keywords{Geometric intersection graphs, girth, shortest paths, graph separators, matrix multiplication}

\renewcommand{\paragraph}[1]{\subparagraph*{#1}}

\newcommand{\eps}{\varepsilon}
\newcommand{\mathify}[1]{\ifmmode{#1}\else\mbox{$#1$}\fi}
\newcommand{\abs}[1]{\mathify{\left| #1 \right|}}

\newtheorem{fact}[theorem]{Fact}

\newcommand{\IGNORE}[1]{}
\newcommand{\R}{\mathbb{R}}
\newcommand{\OO}{\widetilde{O}}

\newenvironment{pproof}{\begin{proof}}{\end{proof}}

\newcommand{\sizeP}{p}

\newcommand{\sep}{S}
\newcommand{\sepq}{Q}
\newcommand{\sepp}{P}
\newcommand{\tree}{\mathcal{T}}

\begin{document}
\maketitle

\begin{abstract}
We present an algorithm that computes the girth of the intersection graph of $n$ given line segments in the plane in $O(n^{1.483})$ expected time. This is the first such algorithm with $O(n^{3/2-\varepsilon})$ running time for a positive constant $\varepsilon$, and makes progress towards an open question posed by Chan (SODA 2023). The main techniques include (i)~the usage of recent subcubic algorithms for bounded-difference min-plus matrix multiplication, and (ii)~an interesting variant of the planar graph separator theorem. The result extends to intersection graphs of connected algebraic curves or semialgebraic sets of constant description complexity.
\end{abstract}

\section{Introduction}

Intersection graphs of geometric objects have received considerable attention in computational geometry.  They
generalize planar graphs (every planar graph is an
intersection graph of disks in 2D by the Koebe--Andreev--Thurston theorem),
and they arise in numerous applications (for example, unit disk graphs can be used to model communication networks, and 
intersection graphs of line segments in 2D can be used to model road networks).
Over the years, researchers have studied many standard computational problems on geometric intersection graphs, aiming
for algorithms that run faster than those for general graphs.  The problems considered include
the computation of connected components~\cite{DBLP:journals/algorithmica/AgarwalK96}, all-pairs shortest paths~\cite{DBLP:journals/jocg/ChanS19}, diameter~\cite{DBLP:conf/compgeom/BringmannKKNP22,DBLP:journals/corr/abs-2510-16346/ChanCGKLZ25}, minimum cut~\cite{DBLP:journals/comgeo/CabelloM21}, maximum matching~\cite{DBLP:journals/dcg/BonnetCM23},
and finding small-sized subgraphs (such as triangles, cycles of a fixed length, cliques of a fixed size, etc.)~\cite{DBLP:conf/soda/Chan23}.

In this paper, we focus on another of the most basic graph algorithm problems, which has not been as well explored in the context of geometric intersection graphs: the computation of the \emph{girth}, i.e., the length of the shortest cycle (see Figure~\ref{fig:girth_of_segment_intersection_graph}).
Girth is related to various topics in graph theory, including the chromatic number, treewidth, diameter, Ramsey theory, etc.~\cite{DBLP:journals/dm/Bollobas78,cook1975chromatic,lovasz1968chromatic,erdos1959graph,DBLP:journals/jct/ChandranS05,DBLP:books/daglib/0030488/Diestel12}. It can be computed in $O(mn)$ or $O(n^{\omega})$ time for general
unweighted undirected graphs with $n$ vertices and $m$ edges \cite{DBLP:journals/siamcomp/ItaiR78}, where $\omega<2.371339$ is the matrix multiplication exponent~\cite{DBLP:conf/soda/AlmanDWXXZ25}. 
For the case of planar graphs, a series of papers~\cite{DBLP:conf/icalp/Djidjev00,DBLP:journals/talg/Djidjev10,DBLP:conf/soda/ChalermsookFN04,DBLP:journals/siamdm/WeimannY10} culminated in a linear-time algorithm by Chang and Lu~\cite{DBLP:journals/siamcomp/ChangL13}.
However, for intersection graphs,
we are aware of only two main algorithmic results:

\begin{figure}
\centering
\includegraphics[scale=.4]{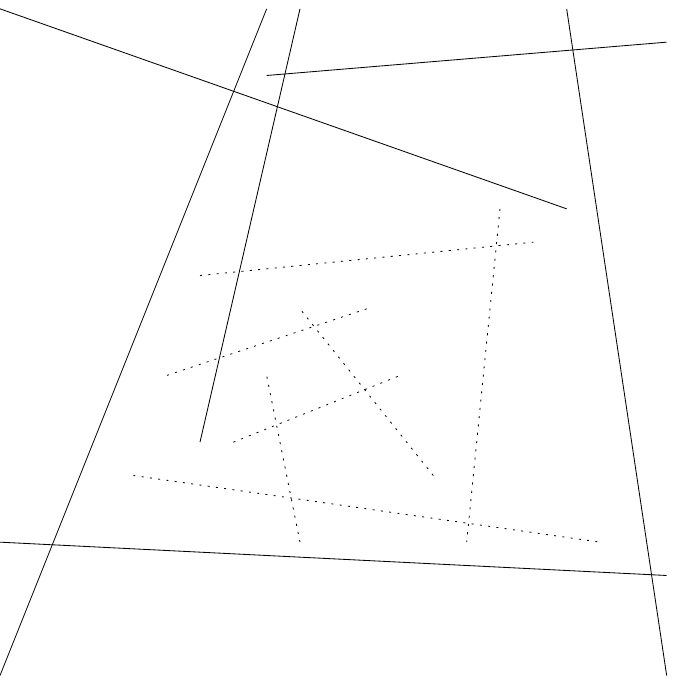}
 \caption{An example of the shortest cycle of a segment intersection graph (segments not in the shortest cycle are dotted). The girth is $6$.}\label{fig:girth_of_segment_intersection_graph}
\end{figure}

\begin{enumerate}
\item Kaplan et al. at ESA'19~\cite{DBLP:conf/esa/KaplanKMRSS19} gave an $O(n\log n)$-time algorithm for computing the girth of the intersection graph
of $n$ disks in 2D (in the case of unit disks, this was subsequently improved by Klost and Mulzer~\cite{DBLP:journals/corr/abs-2405-01180/KlostM24} to an $O(n)$-time algorithm which
does not require the disks be given, just the graph).

\smallskip
\item Chan at SODA'23~\cite{DBLP:conf/soda/Chan23} gave an $\OO(n^{3/2})$-time\footnote{
The $\OO$ notation hides polylogarithmic factors.
} algorithm for computing the girth of the intersection graph
of $n$ line segments in 2D.
\end{enumerate}

\noindent
Chan's work naturally raises the following question:
\begin{quote}
\begin{center}
  \em Is there an $O(n^{3/2-\eps})$-time algorithm for computing the girth of the intersection graph of $n$ line segments for some constant $\eps>0$?
\end{center}
\end{quote}

The case of line segments is particularly important, not only because of direct relevance to road networks, for example, but also because it serves as a stepping stone towards the general case of algebraic curves or semialgebraic sets in the plane.

\subsection{New result}
We answer the above open question in the affirmative by presenting the first algorithm that breaks the $n^{3/2}$ barrier. Specifically, our algorithm computes the girth for the intersection graph of $n$ line segments in 2D in $O(n^{1.483})$ expected time. The algorithm also extends to intersection graphs of algebraic curves or semialgebraic sets where each object is connected and has constant description complexity---essentially, most families of constant-size objects in 2D that are commonly encountered.

\subsection{Why is the problem challenging?}

First note that
Kaplan et al.'s approach~\cite{DBLP:conf/esa/KaplanKMRSS19} is not applicable for our problem, as it exploited special properties of disks that do not hold for line segments (namely, that any intersection graph of disks with girth more than 3 must be planar). For the related problem of finding short cycles, Chan~\cite{DBLP:conf/soda/Chan23}
extended Kaplan et al.'s result to arbitrary fat objects in any constant dimension, by using a shifted quadtree approach, but line segments are not fat.

For intersection graphs of line segments, Chan~\cite{DBLP:conf/soda/Chan23} has already given an $O(n^{1.408})$-time algorithm for detecting a triangle (i.e., 3-cycle), and
an $\OO(n)$-time algorithm for detecting a 4-cycle.  Thus, it suffices to handle the remaining case when the girth is greater than 4.  It is known that any intersection graph of $n$ line segments without 4-cycles must be sparse, i.e., has $O(n)$ edges. We can explicitly construct the arrangement of the line segments and form a planar graph of $O(n)$ size whose vertices are the endpoints and intersection points.  This way, our problem is reduced to a variant of the girth problem on a planar graph, where
the cost of a cycle is the number of turns---turning at an intersection point has unit cost, but going straight through an intersection has zero cost.

However, this variant does not seem directly reducible to the standard girth problem on planar graphs.  One could create a gadget at each intersection point and obtain a graph with $0$/$1$ edge weights, such that the weight of a path in the new graph corresponds to the number of turns along the path.  But the resulting graph would no longer be planar (it would be \emph{$1$-planar} instead); besides, unintended small cycles would be introduced in such a gadget. In addition, many properties about shortest paths in planar graphs no longer hold.  For example, two shortest paths in a planar graph
cannot cross more than once (if ties are broken carefully), but two minimum-turn paths may cross
numerous times.  Consequently, many techniques devised for shortest-path-related problems in planar graphs (e.g., based on some kind of Monge property~\cite{DBLP:journals/jcss/FakcharoenpholR06}, or abstract Voronoi diagrams~\cite{DBLP:journals/talg/Cabello19,DBLP:journals/siamcomp/GawrychowskiKMS21}, or VC-dimension arguments~\cite{DBLP:conf/stoc/LiP19,DBLP:journals/siamcomp/DucoffeHV22,DBLP:conf/soda/LeW24,DBLP:journals/corr/abs-2510-16346/ChanCGKLZ25})
seem to fall apart for our problem.

One technique that still works is \emph{separators}.
In fact, Chan's previous $\OO(n^{3/2})$-time algorithm is obtained by a simple application of separators.  We can divide the graph into two parts by a separator of size $O(\sqrt{n})$, and recurse on both parts.  This handles the case when the shortest cycle lies completely in one part. To deal with the remaining case when the shortest cycle passes through a separator vertex, we can run BFS from each of the $O(\sqrt{n})$ separator vertices, in $\OO(n^{3/2})$ total time.

For the case when the girth is a constant (or for detecting a cycle of a fixed constant length), Chan~\cite{DBLP:conf/soda/Chan23} has shown how to adapt the separator approach to obtain a faster algorithm running in $\OO(n^{\omega/2})\le O(n^{1.186})$ time. The main observation is that the $O(\sqrt{n})$ calls to BFS is overkill---we only need to compute shortest path distances with \emph{both} sources and destinations from the $O(\sqrt{n})$ separator vertices.  We can compute such distances recursively as we compute the girth.  Combining the output of the two recursive calls involves computing min-plus products of  $O(\sqrt{n})\times O(\sqrt{n})$ distance matrices.
When the matrix entries are integers bounded by a constant, min-plus product reduces to $O(1)$ number of standard (Boolean) matrix product.  This is roughly how the $\OO(n^{\omega/2})$ time bound was obtained.

For larger girth, however, this approach does not work well, because the complexity of min-plus product grows for larger integer values. (Another issue is the use of \emph{color-coding}~\cite{DBLP:journals/jacm/AlonYZ95} in Chan's cycle detection algorithm~\cite{DBLP:conf/soda/Chan23}, which led to exponential dependence on the cycle length.)

On the other hand, when the girth $g$ is very large, one could obtain an $\OO(n^2/g)$-time algorithm by a well-known hitting set
technique~\cite{DBLP:journals/siamcomp/UllmanY91,
DBLP:journals/jacm/Zwick02} (there exists a subset of $\OO(n/g)$ vertices which hit all shortest paths of length $\Omega(g)$).  The main difficulty lies in the intermediate case, when the girth is not too small and not too large.

\subsection{Our approach}

\subparagraph*{Ingredient 1: Bounded-difference min-plus matrix multiplication.}
The previous separator-based approach faces an $n^{3/2}$ barrier because min-plus matrix multiplication is widely conjectured to require near-cubic time (according to the ``All-Pairs Shortest Paths (APSP) hypothesis'' from fine-grained complexity~\cite{virgisurvey}).
Our first idea is to employ recent results which solve certain powerful special cases of min-plus product in truly subcubic time---namely, the case when the input matrices are integer-valued and satisfy a \emph{bounded-difference}
property, i.e., the difference between any two adjacent entries (in each row or in each column) is at most a constant.
Chan and Lewenstein~\cite{DBLP:conf/stoc/ChanL15} gave the first truly subquadratic algorithm for the related  min-plus convolution problem for bounded-difference sequences, while
Bringmann et al.~\cite{DBLP:journals/siamcomp/BringmannGSW19} gave the first truly subcubic algorithm for bounded-difference min-plus product.
Bringmann et al.'s original algorithm ran in $O(n^{2.8244})$ time, and after a series of improvements~\cite{DBLP:conf/icalp/GuPWX21,DBLP:conf/soda/ChiDX22}, Chi et al.~\cite{DBLP:conf/stoc/ChiDXZ22} obtained the current best running time of $\OO(n^{(3+\omega)/2})\le O(n^{2.686})$ (see Section~\ref{subsec:prelim:bdd}).

Bounded-difference min-plus convolution and matrix multiplication have recently been applied to solve a variety of problems from different domains, including language edit distance and RNA folding~\cite{DBLP:journals/siamcomp/BringmannGSW19}, single-source replacement paths and Range Mode~\cite{DBLP:conf/icalp/GuPWX21}, tree edit distance~\cite{DBLP:conf/focs/Mao21}, approximate APSP~\cite{DBLP:conf/soda/SahaY24}, knapsack~\cite{DBLP:conf/icalp/BringmannC22,DBLP:conf/esa/BringmannDP24}, etc.
Our algorithm, being one of the first applications to a computational geometry problem, is thus noteworthy.

\subparagraph{Ingredient 2: A variant of the planar-graph separator theorem.}
Why is the bounded-difference property relevant in our application?
Suppose the separator $S$ is connected, e.g., is a cycle (as in Miller's version of the separator theorem~\cite{DBLP:journals/jcss/Miller86}).
Then for any vertex $q$ and two adjacent separator vertices $s,s'\in S$,
$d(s,q)$ and $d(s',q)$ differ by at most 1 due to the triangle inequality.
Thus, the distance matrices we encounter during recursion would indeed satisfy the bounded-difference property.

Unfortunately, cycle separators do not always exist when the given planar graph is not $2$-connected.
Our key idea is to use an interesting variant of the planar-graph separator theorem, which holds for any planar graphs:
given any $\alpha>0$, there always exists a separator $S$ consisting of 
two subsets $P$ and $Q$, where 

\begin{itemize}
    \item the first subset $P$ has size $O(n^{1/2+\alpha})$ but has at most 2 connected components (in fact, at most 2 paths);
    \smallskip
    \item the second subset $Q$ may not be connected but has size $O(n^{1/2-\alpha})$.
\end{itemize}

We are not aware of this variant of the separator theorem explicitly stated before (though we will not be surprised if it appeared before), but it follows by straightforward modification of standard proofs.
See Section~\ref{sec:sep} for the precise statement and proof.

This variant of separators is exactly the tool we need to break the $n^{3/2}$ barrier: 
since $P$ has only $O(1)$ connected pieces, we can use bounded-difference min-plus products to compute distances involving $P$ faster; on the other hand, since $Q$ has size $\ll\sqrt{n}$, we can afford to compute distances involving $Q$ more naively.
We choose the trade-off parameter $\alpha$ carefully to balance cost.
This is the main high-level idea of our algorithm.  Quite a bit of technical effort, however, is needed to make the idea work 
(for example, the number of connected components may grow as we recurse, but not by much, as we will show, and we eventually need two separate recursive algorithms, one for computing distances and one for computing the girth).
See Sections~\ref{sec:dist}--\ref{sec:girth} for the detailed algorithm and analysis.

\subparagraph{Ingredient 3: Ensuring simplicity of walks.}
When computing a shortest path between two vertices,
we do not need to worry about ensuring simplicity of the path, since a shortest walk must automatically be a shortest path (if a vertex is repeated, we can take a short-cut to obtain a shorter walk).
However, more care is needed for the shortest cycle problem in undirected graphs.  When concatenating several shortest paths to form a candidate cycle (via min-plus products of several distance matrices), one potential issue is that the resulting closed walk might become trivial (we might be completely retracing our steps, so that short-cutting yields an empty walk!).
The earlier algorithm by Chan~\cite{DBLP:conf/soda/Chan23} for detecting constant-length cycles used the color-coding technique~\cite{DBLP:journals/jacm/AlonYZ95} to ensure simplicity, but unfortunately color-coding yields exponential dependence on the cycle length.
For general graphs, there are other strategies to ensure simplicity,
(e.g., by computing shortest paths in different subgraphs with some vertices and edges removed,  or with directions added, so as to avoid concatenating two identical shortest paths).
But ensuring simplicity
is trickier in our setting, since we need to retain
the bounded-difference property if we work with new subgraphs.
Still, we have found one way to deal with this technical issue, using ideas given in Section~\ref{sec:simplicity}.

\section{Preliminaries}

Before presenting our algorithm,
we begin by providing more details on the three ingredients mentioned in the introduction.

Let $\delta_G(u,v)$ denote the shortest-path length between $u$ and $v$ in an unweighted undirected graph $G$.
Let $\delta_G[A,B]$ denote the distance matrix $\{\delta_G(u,v)\}_{u\in A, v\in B}$ for two vertex subsets $A$ and $B$ in a graph $G$.
Let $G[S]$ denote the subgraph of $G$ induced by a vertex subset $S$.

\subsection{Bounded-difference min-plus product}\label{subsec:prelim:bdd}

Given two matrices $A,B$, their \emph{min-plus product} $A\star B$ is defined as
\[
(A\star B)(i,j)=\min_k (A(i,k)+B(k,j)).
\]

As mentioned, we will use a known subcubic algorithm for computing the min-plus product of two $n\times n$ integer matrices that satisfy a bounded-difference property.
The following is due to Chi et al.~\cite{DBLP:conf/stoc/ChiDXZ22}:

\begin{lemma}\label{lem:bdd:diff}
Given two integer $n\times n$ matrices $A$ and $B$ where each column of $A$ satisfies the bounded-difference property (i.e., $|A(i,j)-A(i+1,j)|\le O(1)$ for every $i,j$), we can compute their min-plus product $A\star B$ in $\OO(n^{(3+\omega)/2})\le O(n^{2.686})$ expected time.

The same holds if instead  each row of $A$ satisfies the bounded-difference property (i.e., $|A(i,j)-A(i,j+1)|\le O(1)$), or if the property holds for $B$ rather than $A$.
\end{lemma}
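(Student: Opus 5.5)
This lemma is a citation of Chi, Duan, Xie and Zhang~\cite{DBLP:conf/stoc/ChiDXZ22}, so the plan is to reduce each stated variant to their theorem rather than re-derive the algorithm. Their main result says that min-plus product of $n\times n$ integer matrices can be computed in $\OO(n^{(3+\omega)/2})$ expected time when one factor is \emph{monotone} with entries bounded by $O(n)$, and in particular when one factor has bounded differences along one direction. The first step is therefore to match the hypothesis ``each column of $A$ is bounded-difference'' to their setting.

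Their setting allows arbitrary integers in $B$, while the bounded-difference matrix may need its entries to lie in $[0,O(n)]$. To arrange this, subtract $A(1,j)$ from column $j$ of $A$ and add it to row $j$ of $B$; this leaves every sum $A(i,j)+B(j,k)$, and hence $A\star B$, unchanged. The new $A$ has first row zero and bounded column differences, so all its entries lie in $[-O(n),O(n)]$. Adding a uniform constant to $A$ then makes them nonnegative, and that constant is subtracted at the end. If their theorem is phrased for monotone columns instead, add $c\cdot i$ to row $i$ of $A$ for a large enough constant $c$, which makes every column nondecreasing with entries $O(n)$, and subtract $c\cdot i$ from row $i$ of the output. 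Both shifts cost $O(n^2)$.

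The other variants follow by symmetry via $(A\star B)^T = B^T\star A^T$. If instead $B$ is bounded-difference along rows, then $B^T$ is bounded-difference along columns and is now the left factor, so the previous case applies. The cases where the bounded-difference property runs along the inner index are handled by the same row and column shifting trick, or directly by Chi et al.'s theorem, which also covers the case where the property holds in the summation dimension.

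The main obstacle is bookkeeping: I need to confirm which orientation of bounded difference (outer index versus inner index) Chi et al.'s theorem actually handles. I would first check their statement, which I believe covers column-monotone matrices, and then verify that each of the four cases above reduces to it by transposition and additive shifts. The shifts change the output only by additive terms depending on $i$ or $k$, so the time bound is preserved.
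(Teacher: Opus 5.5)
Your proposal is correct and takes essentially the paper's approach: the paper gives no proof beyond citing Chi et al. Your transposition identity $(A\star B)^T=B^T\star A^T$ and your additive shifts are the routine reductions from bounded-difference to $O(n)$-bounded monotone instances that Chi et al.\ themselves note. Inner-index shifts are moved to the other factor, and outer-index shifts are undone on the output.

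One bookkeeping correction. In Chi et al.\ the arbitrary factor is the left one and the monotone factor is the right one, and both row- and column-monotone right factors are covered. So your first case (columns of $A$) needs one transposition before you apply their theorem, after which all four variants go through.
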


\subsection{A variant of the planar separator theorem}\label{sec:sep}

The second ingredient is an interesting variant of the planar-graph separator theorem, in which the separator is divided into two subsets $P$ and $Q$: $P$ may be larger but has only 2 connected pieces, while $Q$ may not be connected but is smaller.  The precise statement is given below (the original theorem corresponds to the case when $p=\sqrt{n}$ with unit loads, while our application requires the case when $p\gg\sqrt{n}$).
The proof follows by modifying standard proofs
\cite{DBLP:journals/siamam/LiptonT79,KMbook,Jeffnotes}.

\begin{lemma}\label{lem-sep}
Let $G=(V,E)$ be a planar graph with $n$ vertices, 
where each vertex has a nonnegative \emph{weight} and a nonnegative \emph{load},
with total weight $W$, total load $\Lambda$, and the maximum individual vertex weight is at most $\eps W$ for a sufficiently small constant $\eps>0$. For a given parameter~$\sizeP$,
we can partition $V$ into subsets $V_1,V_2,\sep$, in $O(n)$ time, such that

\begin{itemize}
\item the total weight of $V_i$ is at most $3W/4$ for each $i\in\{1,2\}$;
\item there are no edges in $V_1\times V_2$;
\item $\sep$ can be further partitioned into $\sepp$ and $\sepq$
such that $\sepp$ is the union of at most two paths of length $O(p)$, and 
$\sepq$ has total load at most $O(\Lambda/p)$.
\end{itemize}
\end{lemma}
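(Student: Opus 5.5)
I will follow the Lipton--Tarjan BFS-levels proof, with one change. The middle region gets the fundamental-cycle argument as usual, but the two cutting levels are chosen by the load of the levels rather than by their size.

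\emph{Reduction and BFS.} Work on each connected component separately. If no component has weight more than $W/2$, group components greedily into two sides with weight at most $3W/4$ each and take $\sep=\emptyset$. Otherwise fix the heavy component, pick a root $r$, and compute its BFS tree $T$ with levels $L_0,L_1,\dots$.

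\emph{Choosing the cut levels.} Let $\ell_1$ be the median level, where the weight first reaches $W/2$. I look for a level $a\le \ell_1$ with $\ell_1-a\le p$ and load $\lambda(L_a)\le O(\Lambda/p)$. Among the $p$ levels $\ell_1-p+1,\dots,\ell_1$ the loads sum to at most $\Lambda$, so the lightest has load at most $\Lambda/p$. If $\ell_1<p$, I take $a=0$ and use the root level $\{r\}$, whose load is at most $\Lambda$. Here I expect to adjust constants, either by counting $r$ as part of a path or by searching the window $[\ell_1-2p,\ell_1]$ with a pigeonhole argument. Symmetrically, I choose $b\ge \ell_1$ with $b-\ell_1\le p$ and $\lambda(L_b)\le \Lambda/p$. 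I put $L_a\cup L_b$ into $\sepq$. After this the part below $a$ and the part above $b$ each weigh at most $W/2$, and the middle region $M$ (levels strictly between $a$ and $b$) spans fewer than $2p$ levels.

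\emph{Separating the middle.} Contract levels $0..a$ to a single vertex $r'$; this keeps the graph planar and gives a spanning tree of depth $O(p)$ rooted at $r'$. If $M$ weighs at most $3W/4$, I stop and group $M$, the part below $a$ and the part above $b$ into two sides of weight at most $3W/4$ (the parts may need to be handled more carefully to meet this bound). Otherwise I triangulate $M\cup\{r'\}$ and apply the standard fundamental-cycle lemma with weights. It yields a non-tree edge $uv$ whose fundamental cycle splits the weight of $M$ in a $2/3$ ratio. Here the hypothesis that the maximum vertex weight is at most $\eps W$ handles the heaviness of single vertices.

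The fundamental cycle consists of the two tree paths from $u$ and $v$ up to $r'$. Once $r'$ is removed, these are two paths in the original graph, each of length $O(p)$, and they form $\sepp$. The triangulation edges do not appear in these paths, since tree edges are original edges. I then combine: the inside and outside of the cycle are each at most $2/3$ of the middle weight, and the parts below $a$ and above $b$ are each at most $W/2$. I distribute these four pieces into $V_1,V_2$ so that each has weight at most $3W/4$. No part below $a$ can touch a part above $b$, since they are separated by levels in $\sepq$. All steps take linear time.

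\emph{Main obstacle.} The hard step is the final weight balancing so that each $V_i$ has weight at most $3W/4$. The parts below $a$ and above $b$ are each at most $W/2$, but combining them with the two sides of the cycle can exceed $3W/4$. I would fix this by choosing the separator relative to the total weight: put the weight of the part below $a$ onto $r'$ and the weight of the part above $b$ onto its own contracted vertex inside a face, then run the cycle lemma on the whole weighted graph, as in the Lipton--Tarjan textbook proof. This gives the $2/3\le 3/4$ bound directly. A heavy contracted vertex weighs at most $W/2$, and the standard lemma tolerates this by treating it specially, as the root or as a face-weight.
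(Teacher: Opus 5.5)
Your proof is correct in outline, but it follows the classical Lipton--Tarjan order, and the paper reverses that order.

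\textbf{Your route.} You fix the light levels $a,b$ first. You then find a fundamental cycle only in the middle band, after contracting levels $\le a$ and deleting levels $\ge b$, so the cycle is short automatically.

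\textbf{The paper's route.} The paper triangulates the whole (connected) graph once, with respect to the global BFS tree. It then takes an edge separator of the degree-3 dual tree, with vertex weights pushed onto faces; this is where the $\eps W$ bound on individual weights is needed. That yields one fundamental cycle $C$ whose two sides each weigh at most $3W/4$ of the \emph{total} weight.
\begin{itemize}
\item If $|C|\le p$, it stops.
\item Otherwise it picks light levels $\ell\in[m-p,m)$ and $h\in(m,m+p]$ around the median level.
\item If the part below $\ell$ or above $h$ already weighs at least $W/4$, that single level is the whole separator.
\item Otherwise $\sepq=L_\ell\cup L_h$, and $\sepp$ is just the portion of the two tree paths of $C$ lying between these levels.
\end{itemize}

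\textbf{What each buys.} The paper never contracts or re-triangulates: a long global cycle is simply truncated to the band, and its balance is already relative to $W$. Your version pays for a contraction and a second triangulation, and it only balances relative to $w(M)$. In return, the length bound on the cycle is immediate, and the middle-part cycle lemma (in Lipton--Tarjan's form) needs no bound on individual weights.

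Two places in your write-up should be cleaned up, though neither is a real gap.

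\textbf{The root level.} This problem disappears if you treat indices below $0$, and beyond the BFS depth, as empty levels of load $0$. When $\ell_1<p$ the pigeonhole window then contains an empty level. So you can take $a=-1$, skip the contraction, and root the middle tree at $r$ itself. This is exactly the paper's convention $L_0=\emptyset$.

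\textbf{The ``main obstacle''.} This is not an obstacle. Your pieces are the part below $a$, the part above $b$, the remaining components, and either the inside and outside of the cycle or $M$ itself in the branch where you do not cut it.
\begin{itemize}
\item These pieces are pairwise nonadjacent and have total weight at most $W$.
\item Each weighs at most $3W/4$: the outer parts and other components at most $W/2$, the two sides of the cycle at most $\frac{2}{3}w(M)\le 2W/3$, and an uncut $M$ at most $3W/4$.
\end{itemize}
Any such family splits into two sides of weight at most $3W/4$. If some piece weighs at least $W/4$, put it alone on one side. Otherwise add pieces to one side until it first reaches $W/4$; it then weighs less than $W/2$, and the rest weighs at most $3W/4$.

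So drop the proposed fix of contracting everything above $b$ into a single vertex. It is unnecessary, and as stated it is not justified, since that part may be disconnected. You would have to contract each component separately, and you would still need the grouping step whenever $r'$ lies on the cycle.
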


\begin{pproof}
We first assume that $G$ is connected.
Let $\tree$ be a BFS tree of $G$ from a fixed source vertex.
For any $uv\not\in\tree$,
let $\mathrm{cycle}(\tree,uv)$ denote the unique cycle in $\tree\cup\{uv\}$
(called the \emph{fundamental cycle}), which consists of $uv$ and the unique path in $\tree$ connecting $u$ and $v$.

Let $\Sigma$ be a (combinatorial) triangulation of $\tree$.
Define face weights of $\Sigma$ by examining each vertex $v$, picking an arbitrary incident face $f$, and adding $v$'s weight to $f$.
Then the total face weight is $W$.
Consider the dual tree with respect to $\tree$ in $\Sigma$ (which has maximum degree 3). 
By standard results on tree separators~\cite{KMbook,Jeffnotes},
we can find an edge which separates this dual tree into two parts each with total face weight at most $3W/4$.  In $\Sigma$, this corresponds to an edge $uv$ such that
either side of $C=\mathrm{cycle}(\tree,uv)$ has total face weight at most $3W/4$; this also bounds the total weight of the vertices strictly inside/outside $C$. If $\abs{C}\le p$ then the desired separator can obtained by setting $\sepp=C$ and $\sepq=\emptyset$. Below we may assume $\abs{C}>p$.

Let $x$ be the lowest common ancestor of $u,v$ in $\tree$, 
and $\tree'$ be the subtree of $\tree$.
Let $L_i\subset V$ be the set of vertices in level $i$ of $\tree'$ (with $L_0=\emptyset$). 
Let $m$ be the smallest integer such that the total weight of vertices in the first $m$ levels of $\tree'$ is at least $W/2$. 
By the pigeonhole principle,
there exists $\ell\in[m-p,m)$ such that 
$L_\ell$ has total load at most $\Lambda/p$.
Similarly there exists $h\in(m,m+p]$ such that 
$L_h$ has total load at most $\Lambda/p$.

If $V_{\le\ell}=\bigcup_{i=1}^{\ell} L_i$ has weight at least $W/4$, then since $V_{\le\ell}$ has weight at most $W/2$ by definition of $m$, the desired separator can be obtained by setting $\sepq=L_{\ell}$  and $\sepp=\emptyset$.

Similarly, if $V_{\ge h}=\bigcup_{i=h}^{\infty} L_i$ has weight at least $W/4$, then the desired separator can be obtained by setting $\sepq=L_h$  and $\sepp=\emptyset$.

In the remaining case, $V_m=\bigcup_{i=\ell+1}^{h-1} L_i$ has weight at least $W/2$. Let $V_{\mathrm{in}},V_{\mathrm{out}}$ be the set of vertices in $V_m$ that are strictly inside or outside $C$, respectively. Without loss of generality, say  $V_{\mathrm{in}}$ has weight at least $W/4$. Recall that  $V_{\mathrm{in}}$ has weight at most $3W/4$ by definition of $C$. Let $P_u$ (resp.\ $P_v$) be the portion of the shortest path from $x$ to $u$ (resp.\ $v$) with levels between $\ell$ and $h$. Then $P_u$ and $P_v$ are paths of length $O(p)$. The desired separator can be obtained by setting $\sepp=P_u\cup P_v$ and $\sepq=L_{\ell}\cup L_h$.
See Figure~\ref{fig:separator}.

Finally, we consider the general case when $G$ may have multiple connected components.
If all components have weight at most $W/2$, a trivial separator exists with $\sep=\emptyset$.  On the other hand, if $G$ has one component of weight at least $W/2$, we can apply the lemma to the largest-weight component and add the remaining components to the smaller of the two subsets $V_1,V_2$ in
terms of weight (which would still have total weight at most $3W/4$).
\end{pproof}

\begin{figure}
\centering
\includegraphics[scale=.5]{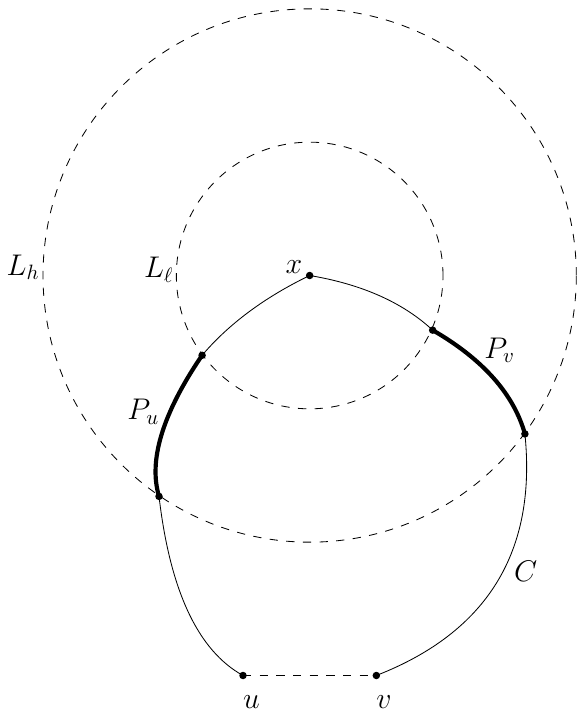}
 \caption{Proof of Lemma~\ref{lem-sep}. Solid curves consist of edges in $G$, whereas dashed curves may contain edges in the triangulation but not in $G$.}\label{fig:separator}
\end{figure}

\subsection{Ensuring simplicity of walks}\label{sec:simplicity}

In this subsection, we formulate a type of \emph{legal alternating} closed walks so that a shortest closed walk of this type is automatically a simple cycle.
This will be useful later in Section~\ref{sec:girth}.
(Warning: both the formulation and the proof of correctness below are somewhat subtle.)

In the following, we imagine that a separator has been computed, and the two parts
are colored red and blue, while the separator vertices are colored white.

\begin{lemma}\label{lem1}
Let $G$ be an undirected graph, where each vertex is colored red, blue, or white, and there are no edges between red and blue vertices.
Consider a closed walk $C$ which is the concatenation of simple paths $\pi^R_0,\tau_0,\pi^B_0,\tau'_0,\ldots,\pi^R_{k-1},\tau_{k-1},\pi^B_{k-1},\tau'_{k-1}$ with $k\ge 1$,
where each $\pi^R_i$ consists of red and white vertices only and has at least one red vertex,
each $\pi^B_i$ consists of blue and white vertices only and has at least one blue vertex,
and each $\tau_i$ and $\tau'_i$ consist of white vertices only (but may be empty).
Then $C$ must contain a nontrivial simple cycle.
\end{lemma}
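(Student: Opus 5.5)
The plan is to argue by contradiction. A closed walk contains no nontrivial simple cycle exactly when the set of edges it uses forms a forest. So assume that the edges of $C$ span a tree $T$, and derive a contradiction from the alternating red/blue structure. I read ``concatenation'' as consecutive pieces sharing their endpoint vertex. I should flag that the lemma seems to need this reading. Under the other reading, where pieces are joined by an extra edge, the walk $r,w,b,w,r$ with $\pi^R_0=(r,w)$, $\pi^B_0=(b,w)$ and empty $\tau$'s would be a counterexample.

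The basic tool is the tree fact that every closed walk in $T$ has a \emph{turnaround}. Root $T$ anywhere and take a vertex $w$ of $C$ at maximum depth. Both the predecessor and the successor of $w$ on $C$ must be its parent $p$. Each piece is a simple path, so $w$ cannot be an interior vertex of a piece. Hence $w$ is a junction between two consecutive pieces $\sigma,\sigma'$, where $\sigma$ ends with the edge $pw$ and $\sigma'$ begins with $wp$.

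The main step is an induction on the total length of $C$, in which I cancel this backtrack. Replace $\sigma,\sigma'$ by $\sigma$ minus its last vertex and $\sigma'$ minus its first vertex, now joined at $p$. Both pieces remain simple paths. The subgoal is to check that the colour conditions survive, or else to reach a contradiction directly. I will split into cases by the types of $\sigma,\sigma'$:
\begin{itemize}
\item If one of them is a white path $\tau$ or $\tau'$, it simply shrinks, possibly to empty, which is allowed.
\item If $\sigma=\pi^R_i$ and $\sigma'$ is the following $\tau_i$ or $\pi^B_i$: the vertex $p$ lies in both, so it is white, and so is $w$. The red vertices of $\pi^R_i$ therefore survive the deletion of $w$.
\item The symmetric cases are handled the same way.
\end{itemize}
After the cancellation we have a shorter closed walk of the same form that still uses only edges of $T$. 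Iterating drives the length down without ever deleting a red or blue vertex. This contradicts the fact that the length cannot decrease forever while $C$ still contains a red vertex.

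The main obstacle I expect is the degenerate configurations. One is a piece that shrinks to a single vertex. Another is a turnaround at a junction where one of $\sigma,\sigma'$ is a single-vertex $\pi$ whose only vertex is $w$ itself. That would have to be a red or blue $w$ flanked by $p$ on both sides, and such a $w$ is not removable without destroying the colour requirement. I will try to rule this out first. The reason is that a red $w$ has all of its $T$-neighbours red or white, while the neighbouring blue piece must eventually reach a blue vertex. So I choose the turnaround vertex more carefully: among maximum-depth vertices, or with the root of $T$ placed at a red vertex of $\pi^R_0$, I aim to guarantee that the turnaround happens at a white junction. If that fails, the fallback is a global argument. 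Delete the white vertices from $T$. The red vertex of $\pi^R_0$ and the blue vertex of $\pi^B_0$ then lie in different components, and in a tree each piece is the unique path between its endpoints. From this I would derive that the segments through red vertices and those through blue vertices must pair up and cancel in a way that forces some $\pi$ to be a simple path with equal endpoints that contains a coloured interior vertex, which is impossible.
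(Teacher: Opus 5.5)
Your main line is the paper's argument in contrapositive form. The paper cancels every immediate reversal $u,v,u$, noting that $u,v$ are then white so the colour conditions survive, and concludes that a closed walk with no reversal has an edge set of minimum degree $2$, hence contains a cycle. You instead assume the edge set is a tree, find a reversal at a deepest vertex, note it must lie at a junction because the pieces are simple, cancel it, and finish by infinite descent. The two are equivalent, and your version makes explicit the two facts the paper leaves implicit: why reversals occur only at junctions, and why a reversal-free closed walk contains a cycle. Your shared-endpoint reading is the intended one, since the algorithm glues pieces by min-plus products. Your counterexample for the other reading is correct.

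The one real gap is the ``degenerate configuration'' you leave open. Your proposed remedies (re-rooting at a red vertex, or the global fallback) do not constitute an argument. The configuration cannot actually occur, and one observation closes it. Under the shared-endpoint reading, the first vertex of $\pi^R_i$ is also the last vertex of the preceding nonempty piece. That piece is either $\tau'_{i-1}$ (all white) or $\pi^B_{i-1}$ (blue/white only). Likewise, the last vertex of $\pi^R_i$ is also the first vertex of $\tau_i$ or of $\pi^B_i$. Hence both endpoints of every $\pi^R_i$ are white, and symmetrically for every $\pi^B_i$. Three consequences follow:
\begin{itemize}
\item No $\pi$ can be a single vertex.
\item Every coloured vertex is interior to its $\pi$.
\item Your turnaround vertex $w$, which is never interior to a piece, is always white.
\end{itemize}
So cancellation never deletes a red or blue vertex. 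This also covers a case you did not list: a single-vertex $\tau_i=(w)$ between $\pi^R_i$ ending in $pw$ and $\pi^B_i$ starting with $wp$. There all three pieces simply lose $w$. Finally, the descent continues because any walk of this form has positive length: it contains both a red vertex and a distinct blue vertex, not merely a red one. With these points added, your proof is complete.
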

\begin{proof}
Suppose that two consecutive edges of $C$ are identical, i.e., we see 3 consecutive vertices $u,v,u$ in $C$.
If $uv$ is the last edge of $\pi^R_i$ and $vu$ is the first edge of $\tau_i$, then
$u$ and $v$ are white vertices and we can remove $v$ from $C$ and still have a closed walk satisfying the stated condition.  
If $uv$ is the last edge of $\pi^R_i$, and $\tau_i$ is empty, and $vu$ is the first edge of $\pi^B_i$, then
$u$ and $v$ must again be white vertices and we can remove $v$ from $C$ and still have a closed walk satisfying the stated condition. 
After handling all other similar cases, we may assume that no two consecutive edges of $C$ are identical.
Now, $C$ forms a subgraph where every vertex has degree at least 2, and so must contain a nontrivial simple cycle.
\end{proof}

\begin{lemma}\label{lem-girth-alternate}
Let $G$ be an undirected graph, where each vertex is colored red, blue, or white, and there are no edges between red and blue vertices.
Let $R$, $B$, and $W$ be the subset of red, blue, and white vertices respectively.
Define the following:

\begin{itemize}
\item A simple path $\pi$ from $u$ to $v$ in $G$ is a \emph{legal red path}
if $\pi$ lies in $G[R\cup W]$, $u,v\in W$, and 
$\delta_{G[R\cup W]}(u,v)\neq \delta_{G[W]}(u,v)$.
\item A simple path $\pi$ from $u$ to $v$ in $G$ is a \emph{legal blue path}
if $\pi$ lies in $G[B\cup W]$, $u,v\in W$, and 
$\delta_{G[B\cup W]}(u,v)\neq \delta_{G[W]}(u,v)$.
\item A closed walk $C$ is a \emph{legal alternating walk} if it is the concatenation of simple paths 
$\pi^R_0,\tau_0,\pi^B_0,\tau'_0,$ $\ldots,\pi^R_{k-1},\tau_{k-1},\pi^B_{k-1},\tau'_{k-1}$ with $k\ge 1$,
where each $\pi^R_i$ is a legal red path, each $\pi^B_i$ is a legal blue path, and each $\tau_i$ and $\tau'_i$ lie in $G[W]$.
\end{itemize}

\noindent
Then
\begin{enumerate}
    \item[\rm (i)] the girth of $G$ is at most the length of a shortest legal alternating closed walk;
    \item[\rm (ii)] furthermore, equality holds if the girth of $G$ is not equal of girth of $G[R\cup W]$ or the girth of $G[B\cup W]$.
\end{enumerate}
\end{lemma}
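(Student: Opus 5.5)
For part~(i) I would show that every legal alternating closed walk $C$ contains a nontrivial simple cycle of length at most $|C|$, using Lemma~\ref{lem1}. The obstacle is that a legal red path is only required to lie in $G[R\cup W]$, so it need not contain a red vertex, as Lemma~\ref{lem1} requires. I would handle this by passing to a shortest legal alternating walk $C$. For each piece $\pi^R_i$ from $u$ to $v$, replace it by a shortest $u$--$v$ path in $G[R\cup W]$. This keeps the walk legal, since legality depends only on the endpoints, and does not increase its length. Because $G[W]\subseteq G[R\cup W]$, the legality condition means $\delta_{G[R\cup W]}(u,v)<\delta_{G[W]}(u,v)$, so the replacement path cannot lie entirely in $G[W]$ and must contain a red vertex. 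The blue pieces are treated symmetrically. The resulting walk satisfies the hypotheses of Lemma~\ref{lem1}, so it contains a simple cycle of length at most its own length, which gives~(i).

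For part~(ii), let $g$ be the girth of $G$, and assume $g$ is strictly smaller than the girths of $G[R\cup W]$ and $G[B\cup W]$ (these girths are always at least $g$). Among all shortest cycles of $G$, I would choose one, $C$, that minimizes the number of non-white vertices. Then $C$ contains both a red and a blue vertex. Since red and blue vertices are not adjacent, traversing $C$ splits it cyclically into maximal red-containing stretches in $R\cup W$ and blue-containing stretches in $B\cup W$. Each stretch has white endpoints, and consecutive stretches are separated by white connectors $\tau_i,\tau'_i$, which may be empty, with shared endpoints. This writes $C$ in the form $\pi^R_0,\tau_0,\pi^B_0,\tau'_0,\ldots$ with $k\ge 1$. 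It then remains to show that every $\pi^R_i$, and symmetrically every $\pi^B_i$, is legal. By~(i), this makes the shortest legal alternating walk have length exactly $g$.

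The legality check is the main step. Suppose some $\pi=\pi^R_i$ from $u$ to $v$, of length $a$, has $\delta_{G[R\cup W]}(u,v)=\delta_{G[W]}(u,v)$. Let $\sigma$ be a shortest $u$--$v$ path in $G[W]$; then $|\sigma|\le a$. Let $\rho$ be the rest of $C$, of length $g-a$; it contains a blue vertex.
\begin{itemize}
\item The closed walk $\pi\cup\sigma$ lies in $G[R\cup W]$ and is nontrivial, since $\pi$ has a red vertex and $\sigma$ does not. It therefore contains a cycle of length at most $2a$. If $2a\le g$, this contradicts the girth of $G[R\cup W]$ exceeding $g$.
\item The closed walk $\sigma\cup\rho$ is nontrivial, since $\rho$ has a blue vertex, and has length $|\sigma|+g-a\le g$. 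If $|\sigma|<a$ it contains a cycle shorter than $g$, which is impossible.
\item In the remaining case $|\sigma|=a$, the walk $\sigma\cup\rho$ has length exactly $g$ and contains a cycle of length at most $g$. I would argue it must itself be a simple shortest cycle. It has strictly fewer non-white vertices than $C$, contradicting the choice of $C$.
\end{itemize}
The delicate point here is turning the closed walk $\sigma\cup\rho$ into a simple cycle of length exactly $g$. My plan is to argue that any repeated vertex would let me extract a strictly shorter nontrivial cycle, contradicting the girth.
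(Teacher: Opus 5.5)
Your proposal is correct and follows essentially the paper's proof. For (i) you apply Lemma~\ref{lem1} after replacing each piece by a shortest path in $G[R\cup W]$ or $G[B\cup W]$, which makes explicit why every legal piece contains a red or blue vertex; the paper leaves this implicit in the minimality of $C$. For (ii) you take the same shortest cycle with fewest non-white vertices and replace a non-legal $\pi^R_i$ by a white shortest path, as the paper does. Your first bullet is redundant, and the ``delicate point'' goes away if, as in the paper, you take any simple cycle inside $\sigma\cup\rho$: it has length at most $g$, hence exactly $g$, and fewer non-white vertices than $C$.
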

\begin{proof}
For (i), consider a shortest legal alternating closed walk $C$.
Then $C$ is the concatenation of simple paths 
$\pi^R_0,\tau_0,\pi^B_0,\tau'_0,\ldots,\pi^R_{k-1},\tau_{k-1},\pi^B_{k-1},\tau'_{k-1}$  with $k\ge 1$,
where each $\pi^R_i$ is a legal red path, each $\pi^B_i$ is a legal blue path, and each $\tau_i$ and $\tau'_i$ lie in $G[W]$.
Each $\pi^R_i$ must pass through at least one red vertex, 
because $\delta_{G[R\cup W]}(u(\pi^R_i),v(\pi^R_i))\neq\delta_{G[W]}(u(\pi^R_i),v(\pi^R_i))$, where $u(\pi)$ and $v(\pi)$ denote the starting and ending vertices of a path~$\pi$.
Similarly, each $\pi^B_i$ must pass through at least one blue vertex.  So, the condition in Lemma~\ref{lem1} is satisfied, and 
$C$ must contain a nontrivial simple cycle.  Hence, the girth of $G$ is at most the length of $C$.

For (ii), consider a shortest simple cycle $C$ in $G$, of length $g$.
If there are multiple shortest simple cycles, pick one with the minimum number of non-white vertices.
We can express $C$ as a concatenation of simple paths $\pi^R_0,\pi^B_0,\ldots,\pi^R_{k-1},\pi^B_{k-1}$ with $k\ge 1$, 
where each $\pi^R_i$ lies in $G[R\cup W]$, uses at least one red vertex, and starts and ends with white vertices, and
each $\pi^B_i$ lies in $G[B\cup W]$, uses at least one blue vertex, and starts and ends with white vertices.
(Note that $C$ contains at least one red and at least one blue vertex, by the assumption in stated in~(ii).)
If $\delta_{G[R\cup W]}(u(\pi^R_i),v(\pi^R_i))=\delta_{G[W]}(u(\pi^R_i),v(\pi^R_i))$ for some $i$,
then we can replace $\pi^R_i$ with a shortest path $\gamma$ from $u(\pi^R_i)$ to $v(\pi^R_i)$ in $G[W]$, and
obtain a closed walk which is the concatenation of two non-identical simple paths $\gamma$ and $C\setminus \pi_i^R$, and so must contain a nontrivial simple cycle;
the cycle has length at most $g$ and uses fewer non-white vertices: a contradiction.
Thus, each $\pi^R_i$ is a legal red path.
Similarly, each $\pi^B_i$ is a legal blue path.
Hence, $C$ is a legal alternating closed walk.
\end{proof}

(An astute reader might wonder why we have overcomplicated the definition of legal alternating walk by including the white subpaths $\tau_i$, since part (ii) of the proof of Lemma~\ref{lem-girth-alternate} does not require them.  The purpose for including them is computational, as we will see later: without them, intermediate distance matrices may no longer satisfy the bounded-difference property.)

\newcommand{\AAA}{\widehat{A}}
\newcommand{\BBB}{\widehat{B}}
\newcommand{\CCC}{\widehat{C}}
\newcommand{\GGG}{{\widehat{G}}}
\newcommand{\HHH}{{\widehat{H}}}
\newcommand{\XXX}{\widehat{X}}
\newcommand{\PPP}{\widehat{P}}
\newcommand{\QQQ}{\widehat{Q}}
\newcommand{\SSS}{\widehat{S}}
\newcommand{\uuu}{\widehat{u}}
\newcommand{\vvv}{\widehat{v}}
\newcommand{\VVV}{\widehat{V}}

\section{Multiple Distances in Clustered Planar Graphs}\label{sec:dist}

As a warm-up, we first present an algorithm for a subproblem (of independent interest), on computing
all distances with sources and destinations from a given subset $X$, in an intersection graph of line segments.
It is actually more convenient to solve the problem for a more general family of graphs, which we call
\emph{clustered planar graphs}, where each vertex in a planar graph is replaced by a ``cluster'' of $c$ vertices:

\begin{definition}
A \emph{$c$-clustered planar graph} refers to the product graph $\GGG=G\times K_c$ where $G=(V,E)$ is a planar graph and $K_c$ is the $c$-clique.  In other words, the vertices of $\GGG$ are pairs $(v,j)\in V\times\{1,\ldots,c\}$, and there is an edge between two vertices $(u,i)$ and $(v,j)$ whenever $uv\in E$ or $u=v$.
\end{definition}

Known shortest-path-related algorithms for planar graphs do not necessarily generalize to clustered planar graphs (because of the lack of Monge property, VC-dimension arguments, etc.).  We prove the following result, which beats the obvious $\OO(xn)$ time bound, for example, when $x$ is around $\sqrt{n}$:

\newcommand{\nn}{\mu}
\newcommand{\bb}{\beta}

\begin{theorem}\label{thm:dist}
Let $\GGG=G\times K_c$ be a $c$-clustered planar graph, where $G=(V,E)$ has $n$ vertices and $c=O(1)$.
Suppose each edge of $\GGG$ has a nonnegative integer weight bounded by $O(1)$.
Given a subset $X$ of $x$ vertices in $G$, we can compute
$\delta_\GGG[X\times K_c,X\times K_c]$ in 
$\OO(n^{3/2-\alpha}+ n^{(1/2+\alpha)(3+\omega)/2} + x^{(3+\omega)/2} + x^2 n^{1/2-\alpha})$
expected time for any given $\alpha$.
\end{theorem}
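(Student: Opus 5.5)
We will prove Theorem~\ref{thm:dist} by a recursive separator-based algorithm. The recursion works with subproblems of the form $(G',X')$, where $G'$ is an induced subgraph of $G$ and $X'$ is a set of ``boundary'' vertices of $G'$. For each subproblem we compute $\delta_{\GGG'}[X'\times K_c,X'\times K_c]$, where $\GGG'=G'\times K_c$. The central invariant is about the boundary: $X'$ is the union of two parts. The first part is $O(1)$ paths in $G$ (coming from earlier $\sepp$'s), of total size $O(m^{1/2+\alpha})$, where $m=|G'|$. The second part is an unstructured set (coming from earlier $\sepq$'s and from the original $X$).

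\textbf{Applying the separator.} At a subproblem we invoke Lemma~\ref{lem-sep} with $p=m^{1/2+\alpha}$, unit loads, and weights that balance both $m$ and the number of boundary vertices of each kind. This balancing is the usual trick of alternating the weight function by recursion depth. It gives $V_1,V_2,\sep=\sepp\cup\sepq$, where $\sepp$ consists of two paths of length $O(m^{1/2+\alpha})$ and $|\sepq|=O(m^{1/2-\alpha})$. We then recurse on $G[V_i\cup \sep]$ with boundary $(X'\cap(V_i\cup\sep))\cup\sep$.

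\textbf{Key obstacle: the number of boundary paths.} The invariant only holds if the number of path pieces stays $O(1)$. Each level adds two new paths, and a piece can be cut by the separator. I would handle this as in the standard $r$-division argument: every few levels, use a separator that is weighted by the path pieces to cut their number back down, arguing that pieces intersecting a region only grow to $O(1)$ before they are reduced. The structure also makes distance rows bounded-difference. Consecutive vertices along a path in $G$ give $(v,i),(v',j)$ adjacent in $\GGG$, and the clique layers are mutually adjacent. Since edge weights are $O(1)$, ordering $\sepp\times K_c$ along the path makes each column of a distance matrix change by $O(1)$ per step.

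\textbf{Combining the two recursive answers.} Given the boundary distance matrices $D_1,D_2$ of the two children, the distances in $\GGG'$ among all boundary vertices are obtained by alternately min-plus multiplying these matrices through the separator vertices. It suffices to compute a closure over $\sep\times K_c$, which takes $O(\log m)$ squarings, and then to combine. I split each product by whether the intermediate index lies in $\sepp$ or in $\sepq$:
\begin{itemize}
\item Products through the $O(1)$ path pieces of $\sepp\cup X'_{\text{path}}$ are bounded-difference. By Lemma~\ref{lem:bdd:diff}, padding to square matrices of dimension $O(m^{1/2+\alpha}+x')$, they cost $\OO((m^{1/2+\alpha}+x')^{(3+\omega)/2})$.
\item Products through $\sepq$ are done naively, at cost $O((m^{1/2+\alpha}+x')^2 m^{1/2-\alpha})$.
\end{itemize}
The loose part of $X'$ is kept in the recursion but not used as an intermediate index. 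This is justified because a shortest path between boundary vertices crossing between sides must pass through $\sep$.

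\textbf{Running time.} The terms involving $x$ appear only at the root, where $x'=x$. There they contribute $x^{(3+\omega)/2}$ and $x^2m^{1/2-\alpha}$, because the cross term $m^{1/2+\alpha}\cdot x\cdot m^{1/2-\alpha}$ is bounded by the other terms. The rest of the recursion $T(m)=T(m_1)+T(m_2)+\OO(m^{(1/2+\alpha)(3+\omega)/2}+m^{3/2-\alpha})$, with balanced $m_i\le 3m/4+O(\text{sep})$, solves to $\OO(n^{3/2-\alpha}+n^{(1/2+\alpha)(3+\omega)/2})$, since both exponents exceed $1$ for the relevant $\alpha$. At the leaves, which have constant size, we use brute force. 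I expect the main difficulty to be maintaining the $O(1)$-pieces invariant and the accompanying size bounds under recursion.
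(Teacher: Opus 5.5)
Your proposal has a genuine gap in how it handles $Q$, so the claimed bound does not follow as written. You do products through $Q$ naively at cost $O((m^{1/2+\alpha}+x')^2m^{1/2-\alpha})$. The leading term of that is $m^{1+2\alpha}\cdot m^{1/2-\alpha}=m^{3/2+\alpha}$, which is worse than the trivial bound, yet it becomes $m^{3/2-\alpha}$ in your recurrence without justification. Your closure over $S\times K_c$, with $Q$ as an unstructured intermediate index, has the same cost. The cross term $m^{1/2+\alpha}\cdot x\cdot m^{1/2-\alpha}=xm$ is also not dominated by the other terms: for $x=\sqrt n$ it equals $n^{3/2}$, and in the girth application ($x=p=n^{1/2+\rho}$) it would be $n^{3/2+\rho}$.

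The missing idea is that products through $Q$ must also go through Lemma~\ref{lem:bdd:diff}, with the bounded-difference property supplied by the \emph{outer}, structured index rather than the intermediate one. Order the rows of each connected boundary piece $C$ by an Euler tour of a spanning tree. Then every column of $\delta_{\widehat G}[\widehat C,\widehat Q]$ is bounded-difference, so $\delta_{\widehat G}[\widehat B,\widehat Q]\star\delta_{\widehat G}[\widehat Q,\widehat B\cup\widehat X]$ costs $\OO(\gamma(\beta+n/p+x)^{(3+\omega)/2})$. This absorbs the $B\times Q\times X$ cross term, and only the $X\times Q\times X$ block is done naively, in $O(x^2n/p)$. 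The paper also obtains $\delta[\cdot,\widehat Q]$ directly by BFS from $\widehat Q$ in the current graph, in $\OO(n^2/p)=\OO(n^{3/2-\alpha})$ time; that, not a product, is the real source of the $n^{3/2-\alpha}$ term. Consequently $Q$ is not added to the children's boundary (only $Q\cap B$ is), and the iterated product runs over $\widehat P$ alone.

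Your $O(1)$-pieces invariant also fails with unit loads. $Q$ may meet the inherited boundary paths in $\Theta(m^{1/2-\alpha})$ vertices, and each such vertex can start a new piece in a child. Weighting the separator by pieces only distributes pieces between the children; it cannot absorb that many new ones per level, and the bounded-difference cost is multiplied by the piece count.

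The paper fixes this with a load trick. The $\beta$ boundary vertices $B$ get load $n/\beta$ and all others load $1$, so Lemma~\ref{lem-sep} gives $|Q|=O(n/p)$ and $|Q\cap B|=O(\beta/p)$ simultaneously. Every component of $B_i=(V_i\cap B)\cup P\cup(Q\cap B)$ in $G_i$ is either an old component, or contains one of the two paths of $P$, or can be charged to a vertex of $Q\cap B$. Hence $\gamma_1+\gamma_2\le\gamma+O(1+\beta/p)$. With $\beta=n^{1/2+\alpha}$ and $p=n^{1/2+\alpha}/\log^{1+\eps}n$, the component count stays polylogarithmic. It is never reduced to $O(1)$; it simply multiplies the bounded-difference cost.

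Balancing $\beta$ against $n$ is done with an extra weight $w=n^{1/2-\alpha}$ on $B$ and a stopping rule $n_u+w\beta_u\le r$, not by alternating weight functions. Finally, $X$ is not confined to the root: it is split among the children with $x_1+x_2\le x$. That is harmless for the $x^{(3+\omega)/2}$ and $x^2n^{1/2-\alpha}$ terms, but it is why the cross term has to be avoided rather than bounded.
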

\begin{proof}
We will solve an extended problem: given $G$, $\GGG$, $X$, and an additional subset $B$ of $\beta$ vertices in $G$ such that $B$ has $\gamma$ components in $G$, we want to compute $\delta_\GGG[(B\cup X)\times K_c,(B\cup X)\times K_c]$.  (Initially, $B=\emptyset$ and $\bb=\gamma=0$.)

\paragraph{Step 1: Build a separator.}
Give each vertex in $V\setminus B$ weight 1 and each vertex in $B$ weight $1+w$ for a parameter $w$ (assuming $n+w\bb\gg (1+w)/\eps$);
the total weight is $n+w\bb$.
Give each vertex in $V\setminus B$ load 1 and each vertex in $B$ load $n/\bb$; the total load is $O(n)$.
We apply Lemma~\ref{lem-sep} for a value $p$ to be set later, to
partition $V$ into $V_1,V_2,S$
such that 

\begin{itemize}
    \item $|V_i|+w|V_i\cap B|\le 3(n+w\bb)/4$ for each $i\in\{1,2\}$; 
    \item there are no edges of $G$ in $V_1\times V_2$;
    \item $S$ can be further partitioned into $\sepp$ and $\sepq$ such that $\sepp$ is the union of at most two paths in $G$ of length $O(p)$, and $Q$ has total load  at most $O(n/p)$---this implies that $|Q|\le O(n/p)$ and $|Q\cap B|\le O(\frac{n/p}{n/\bb})=O(\bb/p)$ simultaneously.
\end{itemize}

\noindent
Let $\VVV=V\times K_c$, $\SSS=S\times K_c$, $\PPP=P\times K_c$, $\QQQ=Q\times K_c$, $\BBB=B\times K_c$ and $\XXX=X\times K_c$.

\paragraph{Step 2: Recurse.}
For each $i\in\{1,2\}$, we recursively solve the problem for $G_i=G[V_i\cup S]$, $\GGG_i=G_i\times K_c$ (with edge weights inherited from $\GGG$), $B_i=(V_i\cap B)\cup P\cup (Q\cap B)$, and $X_i=V_i\cap X$.

Let $n_i = |V_i\cup S|$, $\beta_i=|B_i|$, 
$\gamma_i$ be the number of components of $B_i$ in $G_i$, and $x_i=|X_i|$.
Then $n_1+n_2\le n + O(p+n/p)$, 
$\beta_1+\beta_2\le \bb + O(p+\beta/p)$,
$n_i+w\bb_i\le 3(n+w\bb)/4 + O(w(p+n/p))$, 
and $x_1+x_2\le x$.  

To bound $\gamma_i$, consider a component $C$ of $B_i$ in $G_i$.  If $C$ does not visit $S$, then $C$  is an original component of $B$  (since $S\cap B\subseteq B_i$).  If $C$ visits a vertex of $P$, then $C$ contains one of the two paths in~$P$ (since $P\subseteq B_i$).  If $C$ visits a vertex of $Q\cap B$, then $C$ can be charged to that vertex.
It follows that $\gamma_1+\gamma_2\le\gamma+O(1+\bb/p)$.

\paragraph{Step 3: Compute \boldmath $\delta_\GGG[\BBB\cup\XXX,\BBB\cup\XXX]$.}
We now compute $\delta_\GGG[\BBB\cup\XXX,\BBB\cup\XXX]$ by the following expression:
\begin{eqnarray}
&& \delta_{\GGG_1}[\BBB\cup\XXX,\BBB\cup\XXX] \,\vee\, \delta_{\GGG_2}[\BBB\cup\XXX,\BBB\cup\XXX]\,\vee {}\label{expr0}\\
    && (\delta_\GGG[\BBB\cup\XXX,\QQQ]\star\delta_\GGG[\QQQ,\BBB\cup\XXX]) \,\vee {}\label{expr1}\\
    && \Big((\delta_{\GGG_1}[\BBB\cup\XXX,\PPP]\vee\delta_{\GGG_2}[\BBB\cup\XXX,\PPP])\star(\delta_{\GGG_1}[\PPP,\PPP]\star\delta_{\GGG_2}[\PPP,\PPP])^n \star {}\nonumber\\ &&\qquad\qquad\qquad\qquad  \qquad\qquad\qquad\qquad(\delta_{\GGG_1}[\PPP,\BBB\cup\XXX]\vee \delta_{\GGG_2}[\PPP,\BBB\cup\XXX])\Big),\label{expr2}
\end{eqnarray}  
where $M\vee M'$ denotes the entry-wise minimum of two matrices $M$ and $M'$, and $M^k$ denotes the $k$-th power under min-plus product.

The expression is correct because for $\uuu,\vvv\in \BBB\cup\XXX$, if a shortest path $\pi$ from $\uuu$ to $\vvv$ in $\GGG$ avoids $\SSS$, then its length is 
$\delta_{\GGG_1}(\uuu,\vvv)$ or $\delta_{\GGG_2}(\uuu,\vvv)$; otherwise, if $\pi$ passes through $\QQQ$, then its length is
$(\delta_\GGG[\BBB\cup\XXX,\QQQ]\star\delta_\GGG[\QQQ,\BBB\cup\XXX])(\uuu,\vvv)$;
otherwise, if $\pi$ passes through $\PPP$, starting from $\GGG_i$ and ending in $\GGG_j$, then its length is
$(\delta_{\GGG_1}[\BBB\cup\XXX,\PPP]\star(\delta_{\GGG_1}[\PPP,\PPP]\star\delta_{\GGG_2}[\PPP,\PPP])^n \star\delta_{\GGG_j}[\PPP,\BBB\cup\XXX])(\uuu,\vvv)$.

To evaluate the expression, we
first compute $\delta_{\GGG}[\VVV,\QQQ]$ and $\delta_{\GGG_i}[\VVV,\QQQ]$ for each $i\in\{1,2\}$  by running BFS from each vertex of $\QQQ$. Since $\abs{Q}=O(n/p)$, this part takes $\OO(n^2/p)$ time. 

The matrix $\delta_{\GGG_i}[\BBB\cup\XXX,\BBB\cup\XXX]$ is available, since $(B\cup X)\cap (V_i\cup S)\subseteq B_i\cup X_i\cup Q$, and we know $\delta_{\GGG_i}[\BBB_i\cup\XXX_i,\BBB_i\cup\XXX_i]$ from the two recursive calls, and we already know $\delta_{\GGG_i}[\VVV,\QQQ]$.

The matrices $\delta_{\GGG_i}[\BBB\cup\XXX,\PPP]$
and $\delta_{\GGG_i}[\PPP,\PPP]$ are also available, since $P\subseteq B_i$.

We now describe how to compute 
$\delta_\GGG[\BBB\cup\XXX,\QQQ]\star\delta_\GGG[\QQQ,\BBB\cup\XXX]$ in (\ref{expr1}).
For any component $\CCC\subseteq \BBB$, the columns of $\delta_\GGG[\CCC,\QQQ]$ satisfy the bounded-difference property, after reordering the rows according to an Euler tour of a spanning tree of $C$---some rows may need to be duplicated, but the number of rows increases by only a constant factor.  (Note that all vertices in a common cluster $\{v\}\times K_c$ are within a constant distance from each other.)  Since $B$ has $\gamma$ components, 
we can compute $\delta_\GGG[\BBB,\QQQ]\star\delta_\GGG[\QQQ,\BBB\cup\XXX]$ by Lemma~\ref{lem:bdd:diff} in $\OO(\gamma (\beta+n/p+x)^{(3+\omega)/2})$ time.
The product
$\delta_\GGG[\BBB\cup\XXX,\QQQ]\star\delta_\GGG[\QQQ,\BBB]$ 
is symmetric.
It remains to compute
$\delta_\GGG[\XXX,\QQQ]\star\delta_\GGG[\QQQ,\XXX]$: here, since we don't have the bounded-difference property, we naively compute the product in $O(|X|^2|Q|)=O(x^2n/p)$ time.

Finally, we describe how to compute (\ref{expr2}).
The columns of $\delta_{\GGG_i}[\PPP,\PPP]$ satisfy the bounded-difference property, after dividing the matrix into two parts and reordering the rows, since $P$ is the union of at most two walks; so, we can compute $\delta_{\GGG_1}[\PPP,\PPP]\star\delta_{\GGG_2}[\PPP,\PPP]$ in $\OO(p^{(3+\omega)/2})$ time.
We can compute the $n$-th power by repeated squaring using $O(\log n)$ bounded-difference matrix products. (Note that if the columns of a matrix $M$ satisfy the bounded-difference property, then the columns of a product $M\star M'$ also satisfy the bounded-difference property.)

The rows of $\delta_{\GGG_i}[\BBB\cup\XXX,\PPP]$ satisfy the bounded-difference
property (after dividing the matrix into two parts), and so the 
other products can be computed in a similar way.

\paragraph{Running time.}
The time complexity $T(n,\beta,\gamma,x)$ of the algorithm satisfies the following recurrence:

\[
T(n,\bb,\gamma,x)\le  \!\!\!\! \max_{\scriptsize\begin{array}{c}n_1,n_2,\bb_1,\bb_2,x_1,x_2:\\ n_i+w\bb_i\le 3(n+w\bb)/4+O(w(p+n/p)),\\n_1+n_2\le n+O(p+n/p),\\ \bb_1+\bb_2\le\bb+O(p+\beta/p),\\ \gamma_1+\gamma_2\le\gamma+O(1+\beta/p),\\
x_1+x_2\le x\end{array}}\!\!\!\!\!\!
\begin{array}[t]{l}\Big( T(n_1,\bb_1,\gamma_1,x_1)+ T(n_2,\bb_2,\gamma_2,x_2) + {}
\\[.5ex]\ \ \OO(n+n^2/p+{}\\ \ \  \ \ \gamma (\bb+p+n/p+x)^{(3+\omega)/2} +  x^2n/p)\Big).
\end{array}
\]

\paragraph{Solving the recurrence.}
Intuitively, one could choose $p\approx n^{1/2+\alpha}$, and think of $\beta$ as $\OO(p)$ and $\gamma$ as $\OO(1)$, and see that the cost is upper bounded by $n^{3/2-\alpha}+x^{(3+\omega)/2}+x^2n^{1/2-\alpha}$ for any sufficiently small positive $\alpha$ (as long as $\omega<3$), at least at the top few levels of the recursion.

Formally, let $r$ be a fixed parameter, to be set later, with $r\gg w(p+n/p)\log n$.
For a node $u$ in the recursion tree, let $n_u,\bb_u,\gamma_u,X_u$ denote its input parameters. 
Suppose we stop the recursion at a node $u$ whenever $n_u+w\bb_u\le r$; in particular, this condition implies $n_u\le r$ and $\bb_u\le r/w$ simultaneously. Then the parent $v$ of each leaf satisfies $\nn_v+w\bb_v>r$.  This recursion subtree has depth at most $O(\log\frac{n+w\bb}{r})$. 
Let $\xi$ be the size of this recursion subtree.
Then $\sum_{v} (n_v +w\bb_v) \le O((n + w\bb + \xi\cdot  w(p+n/p))\log\frac{n+w\bb}{r})$,
where the summation is over each $v$ that is the parent of some leaf. 
Thus,  $\xi\le  O(\frac{1}{r} \sum_v (n_v +w\bb_v)) \le 
O(\frac{n+w\beta}{r} \log\frac{n+w\beta}{r}) + o(\xi)$, implying
that $\xi\le O(\frac{n+w\beta}{r} \log\frac{n+w\beta}{r})$.

This leads to a new recurrence:
\[
T(n,\beta,\gamma,x) \le \max_{\scriptsize\begin{array}{l}x_1,x_2,\ldots:\\ \sum_k x_k\le x\end{array}}
\sum_{k=1}^{O(\frac{n+w\beta}{r} \log\frac{n+w\beta}{r})} 
\begin{array}[t]{l}
\big( T(r,r/w,\gamma+\OO(1+\bb/p),x_k) + {}\\[.5ex]
\ \ \OO\big(n + n^2/p+{}\\[.5ex]
\ \ \ \ \ \ \gamma (\bb+p+n/p+x)^{(3+\omega)/2} + x^2n/p)\big).
\end{array}
\]

We now set $\bb=n^{1/2+\alpha}$,  $w=n^{1/2-\alpha}$, $r=n/t$, and $p=n^{1/2+\alpha}/\log^{1+\eps} n$ for a sufficiently large constant~$t$.
Note that indeed $r\gg w(p+n/p)\log n$.
By loosely bounding $r/w$ by $(n/t)^{1/2+\alpha}$, the recurrence becomes:
\[
T(n,n^{1/2+\alpha},\gamma,x) \le \max_{\scriptsize\begin{array}{l}x_1,x_2,\ldots:\\ \sum_k x_k\le x\end{array}} 
\sum_{k=1}^{O(t\log t)} 
\begin{array}[t]{l}
\big( T(n/t,(n/t)^{1/2+\alpha},\gamma+\OO(1),x_k) + {}\\[.5ex]
\ \ \OO(n^{3/2-\alpha}+\gamma (n^{(1/2+\alpha)(3+\omega)/2}+ x^{(3+\omega)/2}) {}\\[.5ex]\ \ \ \ \ \ 
x^2n^{1/2-\alpha})\big).
\end{array}
\]
This solves to $T(n,n^{1/2+\alpha},\gamma,x) = \OO(n^{3/2-\alpha}+ (1+\gamma) (n^{(1/2+\alpha)(3+\omega)/2} + x^{(3+\omega)/2}) + x^2 n^{1/2-\alpha})$.
\end{proof}

\begin{corollary}\label{cor:segs:dist}
Given a set $A$ of line segments in $\R^2$ and a subset $X\subset A$ of $x$ segments, 
we can compute 
$\delta_G[X,X]$, where $G$ is the intersection graph of $A$, in 
$\OO(\nn^{3/2-\alpha}+ \nn^{(1/2+\alpha)(3+\omega)/2} + x^{(3+\omega)/2} + x^2 \nn^{1/2-\alpha})$
expected time for any given $\alpha$, where $\mu$ is the number of endpoints and intersection points of $A$. 
\end{corollary}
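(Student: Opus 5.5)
The plan is to reduce to Theorem~\ref{thm:dist} by encoding the segment intersection graph as a clustered planar graph with small integer edge weights. First build the arrangement of $A$, in $\OO(\mu)$ time by a standard output-sensitive sweep or a randomized incremental construction. This gives a planar graph $G$ on $O(\mu)$ vertices: the endpoints and intersection points, with edges between consecutive points along each segment. Every vertex $v$ of $G$ lies on some segments. At an intersection point exactly two segments cross, assuming general position; a degenerate point where several segments meet is handled by the clusters below, since $c$ only needs to be a constant.

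Since $\delta_G[X,X]$ only needs to be correct in the intersection graph, I would choose a clustered graph in which a path's weight counts turns. Take $c=2$ and let vertex $(v,j)$ mean ``at point $v$, traveling along the $j$-th segment through $v$.'' Edge weights are set as follows.

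\begin{itemize}
\item Between $(u,i)$ and $(v,j)$ for adjacent $u,v$ on a common segment $s$: weight $0$ if $i$ and $j$ both index $s$ at $u$ and $v$ respectively, and weight equal to a large constant (say $3$) otherwise.
\item Between $(v,1)$ and $(v,2)$: weight $1$, representing a turn from one segment to the other at their intersection point.
\item A self-pair of the same vertex plays no role.
\end{itemize}

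Edges absent from $G\times K_c$ do not matter, since we only need weights on a supergraph, and putting a large constant weight on unwanted product edges makes them never useful. Then a path from $(a,\cdot)$ to $(b,\cdot)$ of weight $k$ using no ``bad'' edges corresponds to a sequence of $k+1$ segments, each consecutive pair intersecting. Bad edges have weight exceeding any saving, provided they can be replaced. The cleaner fix is to drop them: give them weight $n$ would violate the $O(1)$ bound, so instead I simply argue that any path using a weight-$3$ bad edge can be rerouted with at most $2$ turns at its endpoints, so the minimum is unaffected.

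For each segment $s\in X$, fix a representative vertex, such as $(\text{left endpoint of } s, 1)$, which lies in $X'\times K_c$ for the set $X'$ of left endpoints of $X$ ($|X'|=x$). Then $\delta_G(s,s')$ equals the distance in the clustered graph between representatives. Reading these entries off $\delta_{\GGG}[X'\times K_c,X'\times K_c]$, as computed by Theorem~\ref{thm:dist} with $n=O(\mu)$, gives the bound.

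The main obstacle is this gadget correctness: making sure that walks in the clustered graph with total weight $k$ correspond exactly to intersection-graph paths with $k$ edges. This means no cheap shortcuts through the cluster structure (which is why wrong-segment moves must be penalized or repaired), and, conversely, every intersection path must be realizable, including traveling along a segment through many intersection points at zero cost. Degenerate points where more than two segments meet would need $c$ equal to the maximum multiplicity. I would handle this by a symbolic perturbation or by noting that $c$ may be any constant bound.
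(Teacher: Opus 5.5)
Your proposal is correct and is essentially the paper's argument: build the arrangement, encode turn counts in a clustered planar graph with $O(1)$ integer weights, and read $\delta_G[X,X]$ off the left-endpoint entries computed by Theorem~\ref{thm:dist}. The only difference is the gadget, and one side remark should be fixed. You use $c=2$ with ports indexed by the segment being traveled, plus a rerouting argument showing the weight-$3$ edges are never useful. The paper instead uses $H\times K_4$ with ports indexed by the incident arrangement edges: opposite ports are joined at weight $0$, adjacent ports at weight $1$, and everything else gets weight $2$. The side remark is that, like the paper, you should simply assume general position rather than claim degenerate points are absorbed by a constant $c$: the multiplicity of a common point is unbounded, and a symbolic perturbation can increase $\mu$.
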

\begin{proof}
For simplicity, we assume that the given segments are in general position (e.g., no 3 segments intersect at a common point).

First we compute the intersections of the line segments in $A$, for example, by applying Bentley and Ottmann's plane-sweep algorithm~\cite{DBLP:journals/tc/BentleyO79},
which runs in $O(\nn\log\nn)$ time.

Let $H$ be the \emph{planarized intersection graph} of $A$, i.e., the planar graph formed by the arrangement of $A$,
where the vertices consist of the segment endpoints and intersection points of $A$, and the edges of $H$ correspond to pairs of adjacent vertices in the arrangement.  

Let $\HHH$ be the clustered planar graph $H\times K_4$ (see Figure~\ref{fig:clustered_planar_graph}).
Number the edges incident to each vertex of $H$
from $\{1,2,3,4\}$ in clockwise order.
Give the edge $(u,i)(v,j)$ weight 0 in $\HHH$ if $uv$ is in $H$ and the $i$-th edge incident to $u$ coincides with the $j$-th edge incident to $v$.  In addition, for
each vertex $u$ of degree~4 in $H$, give the edges $(u,1)(u,2)$, $(u,2)(u,3)$, $(u,3)(u,4)$, $(u,4)(u,1)$ weight 1, and give the edges $(u,1)(u,3)$ and $(u,2)(u,4)$ weight 0.
Give all other edges weight 2 in $\HHH$.
Let $u(s)$ denote the left endpoint of a segment $s$.
Then for any two segments $s$ and $t$,
the distance between $s$ and $t$ in $G$ is equal to the distance between $(u(s),1)$ and $(u(t),1)$ in $\HHH$.
So, the result immediately follows by applying Theorem~\ref{thm:dist} to $\HHH$ and $\{u(s): s\in X\}$.
\end{proof}

\begin{figure}
\centering
\includegraphics[scale=.5]{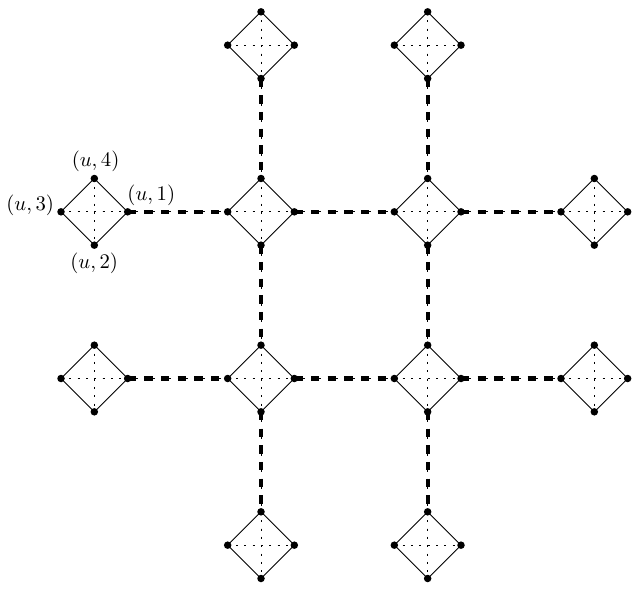}
 \caption{The clustered planar graph $\HHH$. The dotted edges and the dashed edges have weight 0, while the solid edges have weight 1. Edges with weight 2 are omitted.}\label{fig:clustered_planar_graph}
\end{figure}

\section{Girth for Line Segments}\label{sec:girth}

We now present our girth algorithm for intersection graphs of line segments.
It is tempting to continue to work within the clustered planar graph framework, but unfortunately a clustered planar graph by definition contains short cycles in every cluster (furthermore, when generalizing from segments to curves later, an additional difficulty is that simple cycles in the planarized intersection graph may not correspond to simple cycles in the intersection graph---see Figure~\ref{fig:curve}).  

We propose another recursive algorithm, which is also based on our separator lemma but proceeds differently: on one hand, we have to work with intersection graphs instead of clustered planar graphs, and also have to deal with the issue of ensuring simplicity of walks; on the other hand, Corollary~\ref{cor:segs:dist} is available as a subroutine to compute distances, so fewer input arguments are needed in the new recursive algorithm (notably, we don't need an additional input set $B$).

\begin{figure}
\centering
\includegraphics[scale=1]{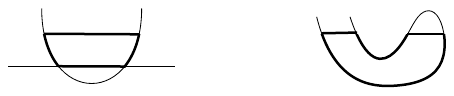}
 \caption{These are not simple cycles in the intersection graph!}\label{fig:curve}
\end{figure}

\begin{theorem}\label{thm:segs:girth}
Given a set $A$ of $n$ line segments in $\R^2$, 
we can compute the girth $g_G$ of its intersection graph $G$ of $A$
in $O(n^{1.483})$ expected time, under the promise that the girth is more than~4.
\end{theorem}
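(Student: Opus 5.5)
Since the girth exceeds~4, $G$ has no 4-cycles and hence $O(n)$ edges, so the planarized intersection graph $H$ has $\mu=O(n)$ vertices. The plan is a recursive divide-and-conquer on $H$, using Lemma~\ref{lem-sep} to split, Lemma~\ref{lem-girth-alternate} to handle cycles that cross the separator, and Corollary~\ref{cor:segs:dist} as a black box for all distance matrices.

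\textbf{Separation and coloring.} Apply Lemma~\ref{lem-sep} to $H$ with unit weights and loads and $p=n^{1/2+\alpha}$. This gives $V_1,V_2$ and $S=P\cup Q$, with $P$ made of at most two paths of length $O(p)$ and $|Q|=O(n^{1/2-\alpha})$. Color segments as follows: a segment is \emph{white} if it contains a vertex of $S$; otherwise it lies entirely in the region of $V_1$ (red) or of $V_2$ (blue). A red segment and a blue segment cannot intersect, since their intersection point would be a vertex of $H$ adjacent to both sides. Recursing on the red${}\cup{}$white and blue${}\cup{}$white segment sets covers the case where the girth of $G$ equals the girth of $G[R\cup W]$ or of $G[B\cup W]$. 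Otherwise, by Lemma~\ref{lem-girth-alternate}(ii), the girth equals the length of a shortest legal alternating closed walk. There are two complications.
\begin{itemize}
\item The white segments through $Q$ number only $O(n^{1/2-\alpha})$. For them we run BFS in $G$, in $\OO(n^{3/2-\alpha})$ total time, and take the minimum over $s\in Q$-segments of $\delta(s,t)+\delta(t,s')+1$ over edges; equivalently, we detect the shortest cycle through $s$ with the standard BFS girth trick.
\item We may therefore assume the cycle avoids those segments and the white set is just $W_P$, the segments through $P$. These are $O(p)$ segments, ordered along two paths so that consecutive ones are within constant distance, giving bounded differences.
\end{itemize}

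\textbf{Computing the alternating walk.} Compute $\delta_{G[R\cup W]}[W_P,W_P]$, $\delta_{G[B\cup W]}[W_P,W_P]$ and $\delta_{G[W]}[W_P,W_P]$ via Corollary~\ref{cor:segs:dist} with $x=O(p)$. Form the legal red matrix $M_R$ by setting an entry to $\infty$ when the red and white distances coincide, and likewise form $M_B$. Let $D_W=\delta_{G[W]}$. The shortest legal alternating walk is then the minimum diagonal entry of $(M_R\star D_W\star M_B\star D_W)^{\star}$, a closure computed by $O(\log n)$ repeated squarings. This is where the $\tau$'s matter. Entries of $M_R$ are masked, so $M_R$ itself is not bounded-difference, but $D_W$ is (its rows and columns follow the paths of $P$). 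Each product therefore has a bounded-difference factor, and Lemma~\ref{lem:bdd:diff} gives cost $\OO(p^{(3+\omega)/2})$ per product. For the closure I would group the products as $M_R\star(D_W\star M_B\star D_W)$, and I still need to check that bounded difference survives masking on the correct side.

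\textbf{Cost and obstacle.} The recursion on $G[R\cup W]$ and $G[B\cup W]$ has sizes at most $3n/4+O(p+n/p)$, so it stays balanced. The per-node cost comes from Corollary~\ref{cor:segs:dist} with $x=p=n^{1/2+\alpha}$:
\[
\OO\bigl(n^{3/2-\alpha}+n^{(1/2+\alpha)(3+\omega)/2}+n^{(1+2\alpha)}n^{1/2-\alpha}\bigr).
\]
The last term is too large, so I would lower $x$ by invoking Corollary~\ref{cor:segs:dist} with its own parameter $\alpha'$ chosen separately. That yields a cost $n^{3/2-\alpha'}+p^{(3+\omega)/2}+p^2n^{1/2-\alpha'}$ to be balanced against $n^{3/2-\alpha}$. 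Since every term is a polynomial below $n^{3/2}$ when $\omega<2.3714$, optimizing $\alpha,\alpha'$ should give an exponent around $1.483$, and the recursion sums geometrically.

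The main obstacle I expect is in the legal alternating walk step. Its correctness must hold, with the masked matrices still admitting fast products, and Lemma~\ref{lem-girth-alternate}(ii) must remain valid after the $Q$-segments are removed. The removal is justified because a cycle through a $Q$-segment was already caught by BFS, though the recursive subinstances still contain those segments.
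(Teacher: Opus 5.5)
Your architecture matches the paper's proof: the same separator and red/white/blue coloring, BFS from the $O(n/p)$ segments through $Q$, masking of red/blue distances where they agree with white distances, right-multiplication by $\delta_{G[P]}$ to restore bounded differences, a closure by $O(\log n)$ squarings, and Corollary~\ref{cor:segs:dist} with its own parameter. The two correctness worries you flag both resolve.

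\begin{itemize}
\item \emph{Masking.} Masking only ever occurs in a left factor. Every matrix in the closure is a minimum of products whose last factor is $D_W$, so its rows are bounded-difference, and Lemma~\ref{lem:bdd:diff} applies at every squaring. The paper does this with $\delta''_{G_i}=\delta'_{G_i}\star\delta_{G[P]}$ and $Z_k=(Z_{k/2}\star Z_{k/2})\vee(\delta''_{G_1}\star\delta''_{G_2})$. Take only powers $i\ge 1$, since a Kleene star has zero diagonal.
\item \emph{$Q$-segments.} Apply Lemma~\ref{lem-girth-alternate} to $G[R\cup B\cup W_P]$, computing all three matrices in $G[R\cup W_P]$, $G[B\cup W_P]$ and $G[W_P]$. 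Then $\min\{g_{\mathrm{legal}},g_{G[R\cup W]},g_{G[B\cup W]},g_Q\}$ is correct. Each term is at least $g_G$, by part~(i) and because girth cannot decrease in a subgraph. One term equals $g_G$, by part~(ii) when the shortest cycle avoids the $Q$-segments. So it is harmless that your recursive instances keep the $Q$-segments.
\end{itemize}

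The genuine gap is the running-time analysis, and that is the whole content of the $n^{1.483}$ claim. Corollary~\ref{cor:segs:dist} with parameter $\alpha'$ and $x=O(p)$ costs $\OO(n^{3/2-\alpha'}+n^{(1/2+\alpha')(3+\omega)/2}+p^{(3+\omega)/2}+p^2n^{1/2-\alpha'})$. In your second cost expression the term $n^{(1/2+\alpha')(3+\omega)/2}$ has been silently replaced by $p^{(3+\omega)/2}$. That term is the cost of the bounded-difference products inside the distance recursion; it depends on $\alpha'$, not on $p$, and it is exactly what stops you from taking $\alpha'$ large.

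You also never fix $\alpha$ and $\alpha'$, so the exponent is asserted rather than derived. Optimizing your expression as written would suggest roughly $1.457$, which the argument does not support.

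The correct balancing is among $n^2/p$, $p^2n^{1/2-\alpha'}$ and $n^{(1/2+\alpha')(3+\omega)/2}$:
\begin{itemize}
\item With $p=n^{1/2+\rho}$ and $\alpha'=3\rho$, the first two terms both equal $n^{3/2-\rho}$.
\item Equating them with the third gives $\rho=\frac{3-\omega}{22+6\omega}>0.0173$.
\item The remaining terms $n^{3/2-3\rho}$ and $p^{(3+\omega)/2}$ are dominated.
\end{itemize}
The recurrence $T(n)\le T(n_1)+T(n_2)+\OO(n^{3/2-\rho})$, with $n_i\le 3n/4+o(n)$ and $n_1+n_2\le n+o(n)$, then solves to $\OO(n^{3/2-\rho})\le O(n^{1.483})$.
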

\begin{proof}
It is known that any intersection graph of $n$ line segments with girth more than~4 must have $O(n)$ edges~\cite{DBLP:journals/jgt/PachS08}.

\paragraph{Step 1: Build a separator.}
We first construct the planarized intersection graph $H=(V_H,E_H)$,
as in the proof of Corollary~\ref{cor:segs:dist},
in $\OO(n)$ time.  Note that $\mu=|V_H|=O(n)$.
Define the weight of a point $p\in V_H$ to be 1 if it is the left endpoint of a segment, and 0 otherwise; the total weight is $n$.
We can apply Lemma~\ref{lem-sep} (with unit loads), for a value $p$ to be set later, to
partition $V_H$ into $V_1,V_2,S_H$
such that 

\begin{itemize}
    \item $V_i$ has weight at most $3n/4$ for each $i\in\{1,2\}$;
    \item there are no edges of $H$ in $V_1\times V_2$;
    \item $\sep_H$ can be further partitioned into $\sepp_H$ and $\sepq_H$ such that $\sepp_H$ is the union of at most two paths in $H$ of length $O(p)$, and $|Q_H|= O(n/p)$.

\end{itemize}

\noindent
For each $i\in\{1,2\}$, let $A_i$ consist of the segments $s\in A$ for which all points of $V_H$ on $s$ are in $V_i$.
Note that $|A_i|\le 3n/4$.
Let $P$ consist of the segments with at least one point in $P_H$. Let $Q$ consist of the segments with at least one point in $Q_H$ and no points in $P_H$.
Let $S=P\cup Q$. Note that 
$P$ is the union of at most two walks of length $O(p)$ in $G$ (since any path in $H$ corresponds to a walk in $G$).

\paragraph{Step 2: Recurse.}
For each $i\in\{1,2\}$, we recursively solve the problem for $A_i\cup P$ to
obtain the girth $g_{G_i}$ of the intersection graph $G_i$ of $A_i\cup P$.  (Note that the assumption that $G$ has girth more than~4 implies that $G_i$ also has girth more than~4.)

Let $n_i=|A_i\cup P|$.  Then $n_1,n_2\le 3n/4 + O(p)$ and $n_1+n_2\le n + O(p)$.

\paragraph{Step 3: Compute \boldmath $\delta_{G_i}[P,P]$.}
For each $i\in\{1,2\}$, we compute $\delta_{G_i}[P,P]$ by Corollary~\ref{cor:segs:dist}
in $\OO(n^{3/2-\alpha}+ n^{(1/2+\alpha)(3+\omega)/2} + p^{(3+\omega)/2} + p^2 n^{1/2-\alpha})$
expected time.

\paragraph{Step 4: Compute the girth \boldmath $g_G$.}
We first run BFS from each segment of $Q$ to find the length $g_Q$ of the shortest cycle visiting $Q$ in $G$.  This part takes $\OO(n^2/p)$ time. 

\newcommand{\glegal}{g_{\textrm{legal}}}
Next, we compute the length $\glegal$ of the shortest legal alternating closed walk where the segments in $A_1$ are red, the segments in $A_2$ are blue, and the segments in $P$ are white (discarding the segments in $Q$).
By Lemma~\ref{lem-girth-alternate}, the girth $g_{G[A_1\cup A_2\cup P]}$ of $G[A_1\cup A_2\cup P]$ is equal to
$\min\{\glegal,g_{G_1},g_{G_2}\}$.  The final answer
is $g_G=\min\{g_{G[A_1\cup A_2\cup P]},g_Q \}$.

To compute $\glegal$,
for each $i\in\{1,2\}$, we first compute the ``legal'' distance matrices $\delta'_{G_i}[P,P]$: for $u,v\in P$, let
\[
\delta'_{G_i}(u,v)=\begin{cases}
    \delta_{G_i}(u,v) & \text{if } \delta_{G_i}(u,v)\ne \delta_{G[P]}(u,v),\\
    +\infty & \text{otherwise.}
\end{cases}
\]
We already know the matrix $\delta_{G_i}[P,P]$.
We can compute $\delta_{G[P]}$ naively in $\OO(p^\omega)$ time.
Consequently, we know the matrix $\delta'_{G_i}[P,P]$.

Next, we compute
\[ \delta''_{G_i}[P,P] = \delta'_{G_i}[P,P]\star\delta_{G[P]}.\]
Although $\delta'_{G_i}[P,P]$ may not satisfy a bounded-difference property,
the rows of $\delta_{G[P]}$ satisfy the bounded-difference property, after dividing the matrix into two parts and reordering the columns, since $P$ is the union of at most two walks.   So, we can compute $\delta''_{G_i}[P,P]$ in
$\OO(p^{(3+\omega)/2})$ time by Lemma~\ref{lem:bdd:diff}.
It also follows that the rows of $\delta''_{G_i}[P,P]$ satisfy the bounded-difference property.  
Finally, let 
\[ Z_k = \bigvee_{i=1}^k (\delta''_{G_1}[P,P]\star \delta''_{G_2}[P,P])^i. \]
From the definition of legal alternating walks, we see that
$Z_n(u,u)$ is precisely the length of the shortest legal alternating closed walk starting at $u\in P$.  Thus, $\glegal$ is the smallest diagonal entry in $Z_n$.  
We can compute $Z_n$ recursively by the following formula:
\[ Z_k =  (Z_{k/2}\star Z_{k/2}) \vee (\delta''_{G_1}[P,P]\star \delta''_{G_2}[P,P]). \]
(This assumes that $k$ is even, and the odd case is similar.)  This requires $O(\log n)$ min-plus matrix products, and all intermediate matrices satisfy the bounded-difference property.  So, $Z_n$ can be computed in $\OO(p^{(3+\omega)/2})$ time.

\paragraph{Running time.}
The time complexity $T(n)$ of the algorithm satisfies the following recurrence:
\[
T(n)\le \max_{\scriptsize\begin{array}{c}n_1,n_2:\\ n_1,n_2\le 3n/4+O(p),\\n_1+n_2\le n+O(p)\end{array}}
\begin{array}[t]{l}\Big( T(n_1)+ T(n_2) + \OO(n^2/p + {}
\\[.5ex]\qquad n^{3/2-\alpha}+ n^{(1/2+\alpha)(3+\omega)/2} + p^{(3+\omega)/2} + p^2 n^{1/2-\alpha})\Big).
\end{array}
\]
To balance the three terms $n^2/p$, $n^{(1/2+\alpha)(3+\omega)/2}$, and $p^2 n^{1/2-\alpha}$,
we set $p=n^{1/2+\rho}$ and $\alpha=3\rho$, 
with 
$\rho=\frac{3-\omega}{22+6\omega} > 0.0173$,
using the latest bound $\omega<2.371339$~\cite{DBLP:conf/soda/AlmanDWXXZ25}.  Then the recurrence becomes:
\[
T(n)\le \max_{\scriptsize\begin{array}{c}n_1,n_2:\ n_1,n_2\le 3n/4+o(n),\\n_1+n_2\le n+o(n)\end{array}}
\big( T(n_1)+ T(n_2) + \OO(n^{3/2-\rho})\big),
\]
which solves to $T(n)=\OO(n^{3/2-\rho})\le O(n^{1.483})$.
\end{proof}

Finally, we remove the assumption that girth is more than 4:

\begin{corollary}\label{cor:segs:girth}
Given $n$ line segments in $\R^2$, 
we can compute the girth of the intersection graph
in $O(n^{1.483})$ expected time. 
\end{corollary}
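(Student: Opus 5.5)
The plan is to reduce to Theorem~\ref{thm:segs:girth} by first deciding whether the girth is $3$ or $4$, using the known algorithms from Chan~\cite{DBLP:conf/soda/Chan23} mentioned in the introduction.

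\textbf{Step 1 (girth $3$).} Run Chan's triangle-detection algorithm for segment intersection graphs, which takes $O(n^{1.408})$ time. If a triangle exists, output $3$.

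\textbf{Step 2 (girth $4$).} Otherwise, run Chan's $\OO(n)$-time algorithm for detecting a $4$-cycle. If a $4$-cycle exists, output $4$. Because the graph is triangle-free at this point, a $4$-cycle is automatically a shortest cycle.

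\textbf{Step 3 (girth $>4$).} If neither step fires, the girth exceeds $4$. The promise of Theorem~\ref{thm:segs:girth} then holds, so we apply it and obtain the girth in $O(n^{1.483})$ expected time. If the graph is acyclic, that procedure reports $\infty$: the recursion's minimum over empty candidate sets is $+\infty$, which is consistent.

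\textbf{Total time and the one point to check.} The total running time is $O(n^{1.408}+n\,\mathrm{polylog}\, n+n^{1.483})=O(n^{1.483})$. The only subtle point is that Theorem~\ref{thm:segs:girth} relies on sparsity, namely $O(n)$ edges by Pach--Sharir~\cite{DBLP:journals/jgt/PachS08}, and hence on $\mu=O(n)$ intersection points. This holds exactly when there is no $4$-cycle, which Step~2 guarantees. Step~2 itself must run in near-linear time without building the possibly quadratic arrangement, and this is what Chan's algorithm provides. So the main obstacle is only bookkeeping: making sure the $4$-cycle test is applied before any planarization, so that nothing quadratic is computed when the graph is dense.
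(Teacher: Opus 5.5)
Your proposal is correct and matches the paper's proof: rule out girth $3$ with Chan's $O(n^{1.408})$ triangle-detection algorithm, then rule out girth $4$ with Chan's $\OO(n)$ $4$-cycle algorithm, and otherwise apply Theorem~\ref{thm:segs:girth} under its girth-$>4$ promise. One wording fix: $4$-cycle-freeness implies the Pach--Sharir $O(n)$ edge bound but is not equivalent to it, so ``exactly when'' should read ``whenever''.
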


\begin{proof}
We can check whether the intersection graph $G$ contains a 3-cycle
in $O(n^{1.408})$ time by an algorithm of Chan~\cite[Theorem 5.1]{DBLP:conf/soda/Chan23}, or 
a 4-cycle in $\OO(n)$ time by another algorithm of Chan~\cite[Theorem 6.2]{DBLP:conf/soda/Chan23}.
So, we may assume that the girth is greater than~4, and
the result follows from Theorem~\ref{thm:segs:girth}.
\end{proof}

\section{Extension to Algebraic Curves and Semialgebraic Sets}

In this section, we comment on how to extend our result more generally to algebraic curves or semialgebraic sets, assuming that each object is connected and has constant description complexity.

First, Corollary~\ref{cor:segs:dist} and Theorem~\ref{thm:segs:girth} immediately generalize to algebraic curves of constant complexity 
(string graphs without 4-cycles also have linear size~\cite{DBLP:journals/cpc/FoxP10},
and since each pair of curves with constant description complexity intersects $O(1)$ times, the planarized intersection graph has linear size as well).
Thus, the only nontrivial part is the algorithms for girth 3 and 4 by 
Chan~\cite[Theorems 5.1 and 6.2]{DBLP:conf/soda/Chan23}, which we have invoked in the proof of Corollary~\ref{cor:segs:girth}.
The main ingredient in these algorithms is the construction of a biclique cover of near-linear size, when the intersection graph is triangle-free.  Namely, we need to extend~\cite[Lemma 4.5]{DBLP:conf/soda/Chan23} for line segments to the more general case of algebraic curves, as formulated below:

\begin{lemma}\label{lem:bicliques:curves}
Consider the intersection graph $G=(V,E)$ of $n$ algebraic curves in $\R^2$ with constant description complexity. We can either find a triangle in $G$, or compute a \emph{biclique cover}, i.e., pairs of vertex subsets  $\{(A_1,B_1),\dots,(A_s,B_s)\}$ such that the union of $A_i\times B_i$ over all $i$ gives us precisely the edge set $E$, and $\sum_i (|A_i|+|B_i|)=\OO(n)$.  The construction time is $\OO(n)$. Furthermore, in this cover, each element appears in $\OO(1)$ subsets $A_i$ and $\OO(1)$ subsets $B_i$.
\end{lemma}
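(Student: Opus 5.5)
I would follow the structure of Chan's proof of \cite[Lemma 4.5]{DBLP:conf/soda/Chan23} for line segments, replacing each segment-specific primitive with its semialgebraic counterpart. The approach is a multi-level divide-and-conquer (cutting) structure in the plane, in which each intersecting pair of curves is certified by a canonical pair of subsets.

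First, I would decompose each curve into $O(1)$ $x$-monotone Jordan arcs by splitting at its singular points and its vertical-tangency points. Two curves intersect iff some pair of their arcs intersects. Duplicating each curve $O(1)$ times only affects constants, provided we map arc-level bicliques back to curves. A triangle among arcs of distinct curves gives a triangle in $G$. Arcs of the same curve trivially "intersect", and I would simply exclude those pairs.

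Second, I would build a hierarchical vertical decomposition (a cutting) of the arrangement of arcs, for example a segment-tree-like structure over $x$-intervals combined with a $(1/r)$-cutting using constant $r$. This yields $O(\log n)$ levels. For each cell $\Delta$ I would classify arcs as short (an endpoint lies inside $\Delta$) or long (crossing $\Delta$ from boundary to boundary). Every intersecting pair is then detected at a cell where both arcs are present, and at least one of them is long.

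Third comes the triangle-free case analysis inside a cell, which is the key step. Long arcs that cross a cell $\Delta$ between the same two vertical sides form a family of $x$-monotone pseudo-segments inside $\Delta$, since intersections are $O(1)$ per pair and I can refine the cells so that pairs cross at most once. If three long arcs pairwise intersect, we report a triangle. Triangle-freeness of the long-arc intersection graph inside a cell then bounds its structure: because of the ordering along the two sides, intersections among long arcs correspond to inversions of a permutation. A triangle-free permutation graph is bipartite and has a near-linear biclique cover via the standard 2D dominance decomposition, where each element appears in $O(\log n)$ canonical sets. Pairs consisting of a short arc and a long arc are handled by charging to the short arc's endpoint cell: within a cell, short arcs are $O(1)$ on average, and a long arc intersecting a short arc is found by ordering long arcs along the boundary and using range-searching-style canonical subsets. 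This step relies on the fact that an arc with constant description complexity meets a $y$-monotone ordering of pseudo-segments in $O(1)$ contiguous intervals.

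The main obstacle I expect is precisely this last point. Segments have the convenient property that a segment crosses a sorted bundle of pseudo-segments in a contiguous interval, but for a general arc the crossed set can be a union of several intervals. I would try first to bound the number of intervals by $O(1)$ using the $O(1)$-intersections property together with triangle-freeness: if a short arc meets long arcs $a$ and $b$ that cross each other, we immediately get a triangle. In the triangle-free case the long arcs it meets are pairwise disjoint, so they are ordered, and contiguity follows from monotonicity and $O(1)$ crossings.

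Summing over the $O(\log n)$ levels and the $O(\log n)$ canonical-set overhead gives total size $\OO(n)$, $\OO(1)$ appearances of each element, and $\OO(n)$ construction time, using constant-complexity primitives for sorting arcs along cell boundaries.
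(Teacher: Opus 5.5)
Your proposal has a genuine gap, and it sits at the one step the paper identifies as nontrivial (Fact~\ref{fact1}).

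\textbf{The pseudo-segment reduction is unjustified.} You claim you can refine the cells so that pairs of long arcs cross at most once. A triangle-free family of long arcs in a slab can contain $\Theta(n^2)$ pairs that cross twice, and you give no argument that $\OO(n)$ cuts suffice to separate all these double crossings. The cutting in your second step is not needed for this. It also creates cells whose long arcs may cross between non-vertical sides, which you never treat. Without pseudo-segments, the inversion/permutation description fails, because two spanning arcs can cross twice and keep their order on both walls. So for general curves you have neither a triangle test nor a biclique cover for long--long pairs.

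\textbf{The short--long step is false, even for line segments.} In the triangle-free case the long arcs met by a short arc $s$ are pairwise disjoint, as you say. But arcs \emph{not} met by $s$ can interleave with them along the walls. Here is a counterexample in the slab $0\le x\le 1$:
\begin{itemize}
\item horizontal segments $a_i\colon y=-i$ for $1\le i\le k$;
\item a short steep segment $s$ near $x=\frac12$ crossing every $a_i$;
\item parallel steep segments $b_j$ that start on the left wall just below $a_j$ and rise above $y=0$ before reaching $s$;
\item parallel steep segments $b'_j$ that stay below $y=-k-1$ until after $s$ and end on the right wall just below $a_j$.
\end{itemize}
Each $b_j$ lies above every $b'_l$ on both walls, so the two families do not cross. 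Hence every intersecting pair contains some $a_i$, and the graph is bipartite. Yet the set $\{a_1,\dots,a_k\}$ met by $s$ splits into $k$ separate runs in the order along either wall. So no $O(1)$-interval claim holds for a global boundary order.

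\textbf{The missing idea is the chain decomposition.} Use plain vertical slabs of a segment tree. The long $x$-monotone arcs in a slab are partially ordered by ``lies entirely above''. Two of them are comparable iff they are disjoint, so an antichain of size~3 is a triangle. If no such antichain exists, Dilworth's theorem covers the long arcs by two chains. Each chain is a family of pairwise disjoint, vertically ordered arcs. Given the left-endpoint order, the linear-time algorithm of \cite{DBLP:journals/order/FelsnerRS03} returns either the antichain or the two chains. This needs neither pseudo-segments nor a cutting. (For pseudo-segments this is just the bipartition you mention, but you never use it.)

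Each chain member separates the slab, and the part of any $x$-monotone arc inside the slab is connected. Therefore every other arc, whether short or long from the other chain, meets a \emph{contiguous} interval of each chain. That interval can be found by binary search and split into $O(\log n)$ canonical subsets of a balanced tree over the chain. This yields the bicliques for both long--long and short--long pairs, with the claimed $\OO(n)$ size, time, and multiplicity bounds.
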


To prove the lemma, we first cut each curve into $O(1)$ $x$-monotone pieces.  (Cutting the curves might change the girth, but is fine for the purposes of constructing a biclique cover.)
We can then follow the proof of \cite[Lemma 4.5]{DBLP:conf/soda/Chan23}, using segment trees.  Every step works for $x$-monotone curves except for the following subproblem:

\begin{fact}\label{fact1}
Consider a vertical slab $\sigma$ and $n$ ``long'' $x$-monotone curves with left endpoints on the left wall of $\sigma$ and
right endpoints on the right wall of $\sigma$.  We can either find a 3-cycle in their intersection graph, or compute a 2-coloring in $O(n)$ time, after sorting the left endpoints.
\end{fact}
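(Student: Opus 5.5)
The plan is to 2-color the curves by the order of their left endpoints on the left wall of $\sigma$. Each curve separates $\sigma$ into an upper and a lower region, since it runs from the left wall to the right wall. The key structural claim is this: if the intersection graph is triangle-free, then a curve $c$ cannot intersect two curves that intersect each other. So the set of curves intersecting $c$ is independent. Moreover, for curves with left endpoints on a common wall, I expect the intersection graph to be the comparability-like graph of a permutation-type structure. Two long curves that cross form a "swap" relative to the left-to-right order of endpoints, generalized to curves crossing $O(1)$ times.

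Here is the concrete plan. Sort the curves by left endpoint: $c_1,\dots,c_n$ from bottom to top. I claim that in a triangle-free intersection graph of such long curves, for $i<j<k$, if $c_i$ intersects $c_k$ then $c_j$ must intersect $c_i$ or $c_k$. The argument is topological. The curve $c_j$ starts between $c_i$ and $c_k$ on the left wall. Near the left wall it lies in the region bounded by $c_i$, $c_k$, the left wall, and the portions up to their first crossing. Since $c_j$ is long, it must reach the right wall. But the region between $c_i$ and $c_k$ before their first crossing is closed off on the right by that crossing point. Hence $c_j$ must exit the region, and it can only do so by crossing $c_i$ or $c_k$.

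Given this property, the graph has no triangles and behaves like an interval-nesting structure. I would then perform a greedy stack-based coloring. Scan the curves in sorted order and maintain the structure using only the consecutive pairs $(c_j,c_{j+1})$. Either the consecutive curves intersect and receive opposite colors, or, by the property, every intersection across them must pass through intermediate curves. I would show the graph is bipartite by showing that any odd cycle forces a triangle. Take a shortest odd cycle and its extreme index $i$. Apply the betweenness property to the neighbors of $c_i$ on the cycle to produce a chord, which gives a shorter odd cycle.

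The main obstacle is making the linear-time coloring explicit without computing all intersecting pairs. One option is to color $c_j$ opposite to $c_{j-1}$ if they intersect and the same otherwise. I would then verify each edge by testing only $O(n)$ candidate pairs, namely consecutive pairs and pairs produced by the stack. When a test fails, I would extract a triangle via the betweenness property. Intersection tests for constant-complexity curves take $O(1)$ time each, giving $O(n)$ total after sorting.
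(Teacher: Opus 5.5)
\textbf{There is a genuine gap: the bipartiteness argument does not work, and the coloring algorithm is wrong.}

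Your betweenness lemma is correct. It follows from the intermediate value theorem applied to $c_j-c_i$ and $c_k-c_j$ up to the first crossing of $c_i$ and $c_k$. It is the contrapositive of the transitivity of the ``aboveness'' relation the paper uses.

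The two steps you build on it fail. First, the odd-cycle argument produces no chord. If $c_i$ is the minimum-index vertex of the cycle and $c_a,c_b$ ($i<a<b$) are its cycle neighbours, the lemma for $i<a<b$ only says that $c_a$ meets $c_i$ or $c_b$. But $c_a$ already meets $c_i$, so nothing new follows.

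Second, and more seriously, your coloring rule is wrong. In the slab $0\le x\le 1$ take the segments
\begin{itemize}
\item $c_1$ from $(0,0)$ to $(1,2)$,
\item $c_2$ from $(0,1)$ to $(1,0)$,
\item $c_3$ from $(0,2)$ to $(1,3)$,
\item $c_4$ from $(0,3)$ to $(1,1)$.
\end{itemize}
The only intersecting pairs are $c_1c_2$, $c_1c_4$ and $c_3c_4$. So the graph is the path $c_2c_1c_4c_3$: triangle-free, with the valid coloring $\{c_1,c_3\},\{c_2,c_4\}$. Your rule colors $c_1,c_2,c_3,c_4$ as $A,B,B,A$, which is improper, and no triangle exists to extract once the conflict on $c_1c_4$ is detected.

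Beyond this, the ``stack'' is never defined. You also give no reason why $O(n)$ tests could certify a coloring of a graph that may have $\Theta(n^2)$ edges: two families of parallel segments of slopes $\pm1$, crossing each other inside the slab, give $K_{n/2,n/2}$.

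\textbf{The missing idea is the poset structure.} Say $a\prec b$ if $a,b$ are disjoint and $a$ lies below $b$. Since the curves span the slab, any two disjoint curves are comparable. The relation $\prec$ is transitive, and the left-endpoint order is a linear extension of it. Consequences:
\begin{itemize}
\item $G$ is the incomparability graph of this poset.
\item A color class is independent iff it is a chain.
\item A chain is certified by checking only its consecutive members in sorted order. This is what makes $O(n)$ intersection tests suffice.
\item Triangle-freeness means width at most $2$, so by Dilworth two chains exist. This replaces your odd-cycle argument.
\end{itemize}

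Finding the two chains in linear time still needs a genuine algorithm, not a one-pass greedy. When a new curve lies above both current chain tops, as $c_3$ lies above $c_1$ and $c_2$ in the example, appending it to the wrong chain ($c_2$'s) leads to a dead end at $c_4$. This happens even though no triangle exists, so such choices must be deferred or revised. The paper handles this by invoking the linear-time width-$2$ algorithm of Felsner, Raghavan, and Spinrad. Given a linear extension, it returns either a cover by two chains or a $3$-element antichain, i.e., a triangle.
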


For line segments, this fact is easy to show by a simple incremental/greedy algorithm.
For general $x$-monotone curves that may pairwise intersect more than once, there is also an incremental algorithm but it is slightly trickier.  One way to view this subproblem is via Dilworth theorem: the ``aboveness'' relation defines a partial order, and triangle-freeness implies that the maximum antichain size of the poset is 2, and so the poset can be covered by 2 chains.
An $O(n)$-time algorithm is explicitly described, for example, in \cite[Section~4.1]{DBLP:journals/order/FelsnerRS03}, which outputs either an antichain of size~3 or a cover by 2 chains, for any poset where a linear extension is given (in our case, a sorted ordering of the left endpoints is given).
The rest of Chan's triangle-detection algorithm requires no modification.
Assuming that the intersection graph is triangle-free, Chan's 4-cycle detection algorithm, when given the biclique cover, generalizes as well without modification.

Lastly, we note how to generalize the result for connected semialgebraic sets.
The key observation is that if the girth is more than 3, then in any cycle $s_1,\ldots,s_g,s_1$, there can be no containment pairs: if $s_i$ is contained in $s_j$, then $s_{i-1},s_i,s_j$ would form a triangle.
Thus, we can replace each semialgebraic set $s$ with their boundary $\partial s$, which is a connected semialgebraic curve if there is no hole.  If $s$ has holes but is connected, we can attach curves to $\partial s$ to obtain a single connected curve that is inside $s$ and includes $\partial s$.  We can then solve the problem for the resulting curves.

The remaining case is girth 3.  The triangle-detection algorithm for algebraic curves
extends to semialgebraic sets easily.  For example, Fact~\ref{fact1} immediately extends, by the same connection to Dilworth's theorem, and consequently so does Lemma~\ref{lem:bicliques:curves}.

\begin{corollary}\label{cor:girth:curves}
Given $n$ algebraic curves or semialgebraic sets in $\R^2$ where each object is connected and has constant description complexity, 
we can compute the girth of the intersection graph
in $O(n^{1.483})$ expected time. 
\end{corollary}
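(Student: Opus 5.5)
The proof follows the same two-stage structure as Corollary~\ref{cor:segs:girth}: first decide whether the girth is 3 or 4, and otherwise invoke the analogue of Theorem~\ref{thm:segs:girth}. I would first reduce semialgebraic sets to curves. If the girth exceeds 3, then no object contains another, since otherwise, as observed above, a triangle appears. So each set $s$ can be replaced by a connected curve $\gamma_s\subseteq s$ that contains $\partial s$; when $s$ has holes, attach $O(1)$ arcs inside $s$ joining the boundary components. Two sets without a containment relation intersect iff their boundaries intersect. Since $\gamma_s\subseteq s$ and $\partial s\subseteq\gamma_s$, the intersection graph is unchanged, and each $\gamma_s$ still has constant description complexity. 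The case of girth 3 for sets is handled directly by the triangle detection algorithm, which goes through because Fact~\ref{fact1} and Lemma~\ref{lem:bicliques:curves} extend via Dilworth's theorem. From now on all objects are connected algebraic curves of constant complexity.

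For girth 3 and 4, I use Lemma~\ref{lem:bicliques:curves}. If it does not already return a triangle, it yields a near-linear biclique cover in which each curve appears in $\OO(1)$ sets. Chan's triangle detection algorithm~\cite[Theorem 5.1]{DBLP:conf/soda/Chan23} then runs in $O(n^{1.408})$ time and his 4-cycle detection algorithm~\cite[Theorem 6.2]{DBLP:conf/soda/Chan23} runs in $\OO(n)$ time, because both only access the graph through this cover.

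Now assume the girth exceeds 4. String graphs with no 4-cycle have $O(n)$ edges~\cite{DBLP:journals/cpc/FoxP10}. Any two curves cross $O(1)$ times, so the planarized arrangement $H$ has $\mu=O(n)$ vertices. After a general-position perturbation it can be built in $\OO(n)$ time by a sweep over $x$-monotone pieces.

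Corollary~\ref{cor:segs:dist} carries over as follows. Each arrangement vertex is a crossing of two curves, so it has degree at most 4. At tangencies or repeated crossings between the same pair of curves, the 0/1/2 gadget of $\HHH=H\times K_4$ still encodes whether we stay on the same curve (weight 0) or switch curves (weight 1). One subtlety is that a curve may pass through the same point twice if it self-intersects; after cutting, I would treat its two branches as the same object by giving the connecting edge weight 0. Theorem~\ref{thm:dist} then gives the distance bound.

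The girth recursion of Theorem~\ref{thm:segs:girth} uses only four facts: Lemma~\ref{lem-sep} applied to $H$; that a path in $H$ maps to a walk in $G$, so $P$ is covered by two walks and $\delta_{G[P]}$ has the bounded-difference property; Lemma~\ref{lem-girth-alternate}, which holds for arbitrary graphs; and that every edge of $G$ among curves not in $Q$ lies inside some $A_i\cup P$. That last fact holds because an intersection point of two curves is a vertex of $H$ lying on both, so two curves entirely in $V_1$ and $V_2$ respectively cannot meet.

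The main obstacle is Figure~\ref{fig:curve}: simple cycles of $H$ need not correspond to simple cycles of $G$. The girth algorithm, however, never reads cycles off $H$. It only uses distances in $G$, computed through $\HHH$, and legal alternating walks in $G$, and Lemma~\ref{lem-girth-alternate} guarantees that those walks contain genuine cycles of $G$. So correctness carries over, and the recurrence with $\rho=\frac{3-\omega}{22+6\omega}$ gives the $O(n^{1.483})$ bound.
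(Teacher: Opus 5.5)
Your route is the paper's. You run triangle detection via Lemma~\ref{lem:bicliques:curves} (with Fact~\ref{fact1} handled by Dilworth) and, once there is no triangle, reduce semialgebraic sets to connected curves. You then detect 4-cycles from the biclique cover, and otherwise use Fox--Pach sparsity and a linear-size arrangement so that Corollary~\ref{cor:segs:dist} and Theorem~\ref{thm:segs:girth} carry over.

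\textbf{The justification of the semialgebraic reduction fails as written.} The claim that two non-nested sets intersect iff their boundaries intersect is false once holes are allowed. Take $s=\{z\in\R^2:\|z\|\le 2\}$ and the annulus $t=\{z\in\R^2:1\le\|z\|\le 3\}$. The sets meet and neither contains the other, yet $\partial s\cap\partial t=\emptyset$; only the arc you attach to $\partial t$ meets $\gamma_s$. So the arcs are needed for the intersection relation, not only to make $\gamma_s$ connected, and the argument must use them. For compact non-nested $s,t$ with $s\cap t\neq\emptyset$, suppose $\gamma_s\cap\gamma_t=\emptyset$:
\begin{itemize}
\item $t$ is path-connected and meets both $s$ and its complement, so it meets $\partial s\subseteq\gamma_s$.
\item Hence $\gamma_s$ meets $t$. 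Being connected and disjoint from $\partial t\subseteq\gamma_t$, it lies in $\mathrm{int}(t)$. Symmetrically, $\gamma_t\subseteq\mathrm{int}(s)$.
\item A rightmost point of $s\cup t$ lies on $\partial s\cup\partial t$, hence in the interior of the other set. That set then contains points further right, a contradiction.
\end{itemize}
The converse direction is immediate from $\gamma_s\subseteq s$.

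\textbf{The containment claim is too strong.} Girth $>3$ does not imply that no object contains another: a nested object is allowed if its container is its only neighbour. The correct statement, as in the paper, is that no two objects on a common cycle are nested. Replacing such a nested $s$ by $\gamma_s$ may delete its unique edge, so the intersection graph is not literally unchanged. That edge lies on no cycle, however, so the girth is preserved.

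With these two repairs your proposal matches the paper's argument, including your extra care about self-intersections and tangencies in the gadget.
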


Our results are unlikely to be optimal.  
Although we do not know how to improve Corollary~\ref{cor:segs:dist},
we are currently working on ideas that might further improve 
the time bound of Corollary~\ref{cor:segs:girth} for the case of line segments, by exploiting special properties about intersection graphs with large girth.

\bibliography{b}
\bibliographystyle{plainurl}

\end{document}